\newtheorem{proposition}{Proposition}
\newtheorem{theorem}{Theorem}
\newtheorem{corollary}{Corollary}
\newtheorem{assumption}{Assumption}
\newtheorem{lemma}{Lemma}
\newtheorem{remark}{Remark}
\newtheorem{problem}{Problem}
\newcommand{\be}{\begin{equation}}
\newcommand{\ee}{\end{equation}}
\newcommand{\bea}{\begin{eqnarray}}
\newcommand{\eea}{\end{eqnarray}}
\newcommand{\beas}{\begin{eqnarray*}}
\newcommand{\eeas}{\end{eqnarray*}}
\newcommand{\nn}{\nonumber}
\newcommand{\bbm}{\begin{bmatrix}}
\newcommand{\ebm}{\end{bmatrix}}
\newcommand{\matl}{\left[ \begin{array}}
\newcommand{\matr}{\end{array} \right]}
\newcommand{\cC}{\mathcal{C}}
\newcommand{\cD}{\mathcal{D}}
\newcommand{\bR}{\mathbb{R}}
\newcommand{\bS}{\mathbb{S}}
\newcommand{\bW}{\mathbb{N}}
\newcommand{\cB}{\mathcal{B}}
\newcommand{\cK}{\mathcal{K}}
\newcommand{\cF}{\mathcal{F}}
\newcommand{\cG}{\mathcal{G}}
\newcommand{\cI}{\mathcal{I}}
\newcommand{\cM}{\mathcal{M}}
\newcommand{\Delt}{\Delta t}
\newcommand{\mrm}{\mathrm}
\newcommand{\T}{^{\mbox{\small T}}}
\begin{document}

\title{Nonlinearly Stable Real-Time Learning and Model-Free Control}


\author[$\dagger$]{A. K. Sanyal} 
\affil[$\dagger$]{Department of Mechanical and Aerospace Engineering, Syracuse University, Syracuse, 
NY 13104, ph: (315) 443--0466, email: aksanyal@syr.edu} 
\maketitle
%

\begin{abstract}
This work provides a framework for nonlinear model-free control of systems whose input-output 
dynamics are unknown or uncertain, with outputs that can be controlled by the inputs. This 
framework leads to real-time control of the system such that any feasible output trajectory can 
be tracked by the inputs. Unlike existing model-free or data-driven control approaches, the framework 
given here provides guaranteed nonlinear stability based on a Lyapunov stability analysis. The 
controller and observer designs in the proposed framework are nonlinearly finite-time stable and 
robust to the unknown or uncertain dynamics as well as unknown measurement noise. For ease 
of computer implementation, the framework is developed in discrete time. Nonlinear stability analysis 
of the discrete-time observers and controllers are carried out using discrete Lyapunov analysis. The 
unknown or uncertain input-output dynamics is learnt in real time using a nonlinearly stable observer. 
This observer ensures that the uncertain/unknown dynamics is learnt from prior input-output history 
and if the uncertainty in the model is bounded, then the error in the estimate of this dynamics is 
also bounded. Moreover, this observer ensures finite-time stable convergence of model estimation 
errors to zero if the unknown model is constant (not time-varying), and model estimation errors  
converge to a bounded neighborhood of the zero vector if the rate of change of the model is bounded. 
Output measurements are filtered by a finite-time stable observer before being used for feedback 
tracking of a desired output trajectory. Finite-time stable observer design in this framework also 
ensures that a nonlinear separation principle is in effect for separate controller and observer design. 
A model-free nonlinearly stable control scheme is then designed to ensure convergence of observed 
outputs to a desired output trajectory. This control scheme ensures nonlinear finite-time stable 
convergence of tracking errors to a manifold where the tracking errors decay asymptotically. A 
numerical experiment on a nonlinear second-order system demonstrates the performance of this 
nonlinear model-free control framework. 
\end{abstract}

%

\section{Introduction}\label{sec: intro}
For feedback control of nonlinear systems with uncertain or unknown input-output dynamics, 
data-driven control approaches have been proposed and used. When only input-output behavior 
of the system is available, then output regulation to a desired set point or output trajectory tracking 
has to be based on model-free (data-driven) controller and observer designs. This work provides
nonlinear model-free controller and observer designs for output tracking of systems for which only 
input-output knowledge is available. This framework for nonlinear model-free control of second-order 
systems can be implemented on systems of any order, as described in this paper. This framework is 
applicable to nonlinear systems for which output measurements are available and these outputs are 
controllable with the applied inputs. The main contribution of this work is that provides definite 
(quantifiable) guarantees on nonlinear stability of output tracking control and robustness to 
uncertainties, while learning the local input-output behavior of the system in real time. The 
development and implementation of this framework is carried out in discrete time for ease of 
computer implementation.

A majority of linear and nonlinear control approaches are model-based, for which a model of the 
dynamics of the system being controlled is necessary. However, as the number and variety of applications 
of systems and control theory continues to increase, uncertainties and difficulties in modeling systems are 
becoming increasingly important. Of particular interest for this work is the large class of 
(nonlinear) systems with uncertain or unmodeled dynamics that need to be controlled in real time.
This class of systems includes, for example, autonomous vehicles, walking robots, and electronic medical 
implants. For such systems, ``model-free" (i.e., data-driven) control techniques may be used for feedback 
control in real-time. In the last 15 years, the term ``model-free control" for uncertain systems has been 
used in different senses and settings in the published literature. 
These settings are quite varied, and range from ``classic" PIDs to feedback control using techniques 
from neural nets, fuzzy logic, and soft computing to learn the uncertainties in the dynamics, e.g.,
in~\cite{kelbha08,kilkrst06,doco10,sata11,renbig17}. A linear model-free control framework, 
termed the ``intelligent PID" (or ``iPID") scheme, was proposed in~\cite{fljosi08,FlJo13}. 
The iPID framework uses a linear {\em ultra-local model} to describe the unknown input-output dynamics, 
and estimates and uses this model for feedback control. In addition, if the system is known to be 
differentially flat for the selected outputs~\cite{fllemaro95}, then a state trajectory can also be tracked 
and uncertainties in the input-state dynamics can also be estimated over time from the measured outputs 
using model-free filtering techniques~\cite{fljosi08,trsiba07}. However, in the iPID framework, the 
ultra-local model is estimated by a linear filtering scheme assuming measurements at a sufficiently high 
sampling frequency, as reiterated in~\cite{griz17}. Moreover, there are no accompanying stability 
guarantees provided for this iPID scheme. More recently, a data-enabled predictive control (``DeePC") 
method was formulated for data-driven control of unmodeled/uncertain systems that is analogous to the 
classical model-predictive control (MPC) technique for model-based control of linear systems 
in~\cite{DeePC}. Some applications where model-free control techniques have been used are 
given in, e.g.,~\cite{machines-mfc,ydn16,viba11,chgagu11}. While these model-free or data-driven 
techniques may often prove to be useful engineering fixes, they may not converge to desired outputs or 
states in a nonlinearly stable manner. In fact, guaranteed nonlinear stability in the sense of \cite{lyapthes} 
is notably missing in these approaches. 

While prior work using the iPID framework has continuous-time feedback control (e.g., 
in~\cite{fljosi08,FlJo13,machines-mfc,ydn16}), the framework here uses discrete-time nonlinear 
model-free estimation and control for tracking desired output trajectories. 
The framework given here lays the foundation for nonlinearly stable model-free control, using novel 
methods to estimate the local input-output model and filter out noise from measured outputs. Our past 
research using H\"{o}lder-continuous finite-time stable control and estimation schemes in continuous
time have appeared in, e.g.,~\cite{bosa_ijc,sasi:2017:finite,ecc19,asjc19}. 
The finite-time stability of the estimation schemes provides a natural separation of the estimation 
process from the tracking control, as the estimators can be designed to converge in a time that is smaller 
than the settling time of the controller. For tracking a desired output or state trajectory, a 
H\"{o}lder-continuous nonlinear finite-time stable (FTS) tracking control scheme was used 
in~\cite{bosa_ijc,sasi:2017:finite}. These continuous FTS schemes are based on the Lyapunov 
analysis in~\cite{bhatFT}, using H\"{o}lder-continuous Lyapunov functions. Here, we develop a basic 
result that extends this Lyapunov analysis to provide finite-time stable convergence in discrete time. 
In addition, {\em this framework provides guaranteed nonlinear stability of the overall feedback 
system without requiring high frequencies for measurement or control}. The overall emphasis in 
our approach is towards {\em guaranteeing nonlinear feedback system stability} over every other 
system-theoretic property (including ``near optimality"). 

We provide the schemes in discrete-time as they are easier to implement numerically and experimentally. 
In the first part of this framework, a nonlinear, finite-time stable, model-free, estimation scheme is 
designed to estimate the outputs from noisy measurements. In the second part of our nonlinear 
model-free control framework, a nonlinear finite-time stable observer is designed to predict the 
unknown ultra-local model describing the local input-output relation based on observed input-output 
behavior at prior instants. This is a critical component of our framework, as it ensures nonlinear  
stability of the overall feedback loop. The nonlinear finite-time stable observers designed in the first 
two components of our framework ensure that a separation principle is in effect for nonlinear observer 
design independent of the control design. In the third and last part of this framework, we design a 
nonlinearly stable, trajectory tracking control scheme designed to track a desired output trajectory. 
A nonlinearly stable control scheme is designed that ensures convergence to a desired manifold in 
the output space in finite time, and this manifold is designed to ensure that the output tracking 
error converges to zero exponentially. 

The remainder of this paper is organized as follows. The mathematical formulation for model-free control 
of a nonlinear system is introduced along with preliminary results on finite-time stability in discrete time,
in Section \ref{sec: nonsysa}. Section \ref{sec: ftsobs} provides a finite-time stable observer design for 
estimating the outputs and filtering out measurement noise from output measurements. Output estimates
from this finite-time stable observer are used with the feedback tracking control scheme. In section 
\ref{sec: ulm}, a finite-time stable observer is designed to estimate an ultra-local model relating the 
inputs and outputs of the nonlinear system. This ultra-local model also depends on the assumed order 
of the system (which may be unknown), and an assumed {\em influence matrix} relating the derivative of 
the output vector of this order to the control input vector (which is designed as part of the control 
framework). The model-free control law for output tracking is given in Section \ref{sec: mfccon}. This 
section provides a discrete time control law that makes the output tracking error converge to a manifold
in a finite-time stable manner, and this manifold is designed such that the output tracking error converges 
to zero asymptotically. Section \ref{sec: numres} provides numerical simulation results of applying this 
nonlinear model-free control framework to output trajectory tracking of a second order system: the 
inverted pendulum on a cart. The model of this system is assumed to be unknown for purposes of 
control design, while the simulated system has nonlinear friction terms affecting the dynamics of both 
the degrees of freedom. The results of this numerical experiment applying this framework to this 
well-known unstable system corroborate its analytical stability and robustness properties. Finally, 
section \ref{sec: conc} provides a summary of the main results of this paper and ends with our 
planned future work in this area.

\section{Nonlinear system assumptions}\label{sec: nonsysa}
All notation used in this paper is defined on first use, and unnecessary notation is not used.
Consider a $\nu$'th-order (and relative degree) nonlinear system with $m$ inputs and $l$ outputs, where 
$m$ and $l$ are positive integers and $m\ge l$. 
In continuous time, the dynamics model of the system relates inputs and output according to:
\be y^{(\nu)}(t)= \psi\big((y,\dot y,\ldots,y^{(\nu-1)},u)(t), t\big)
\label{dynoutmod} \ee
where $\psi:(\bR^m)^\nu \times \bR^+ \to\bR^l$ is a continuous and possibly time-varying map, 
$(\cdot)^{(\mu)}$ denotes the $\mu$'th time derivative of a quantity, $y(t)$ is the system output and 
$u(t)$ is the control input at time $t$. This can be converted to a discrete time system where
$(\cdot)_k=(\cdot) (t_k)$ denotes the value of a time-varying quantity at sampling instant $t_k$, 
with $u_k\in\bR^m$ as the control input, 
where $k\in \bW=\{ 0,1,2,\ldots\}$ and $\bW$ denotes the index set of whole numbers including $0$. 
We consider the case that the the input-output model of the system \eqref{dynoutmod} (i.e., the function 
$\psi$) is unknown, but the outputs are measured and can be controlled by the inputs. This is stated in 
the following assumption.
\begin{assumption} \label{assump1}
The nonlinear system given by \eqref{dynoutmod} has $\psi(\cdots)$ unknown 
but outputs $y_k=y(t_k)$ are available from sensor measurements at sampling instants. Further, 
this system is input-output controllable.  
\end{assumption}
To express the continuous time system \eqref{dynoutmod} in discrete time, we first replace 
derivatives in continuous time with finite differences in discrete time. In an abuse of notation,  
we use the superscript $(\mu)$ to denote the $\mu$th order finite difference of the output $y_k$ in 
place of the $\mu$th time derivative in eq. \eqref{dynoutmod}. The forward 
difference defined by
\be y_k^{(\mu)} := y_{k+1}^{(\mu-1)}-y_k^{(\mu-1)} \mbox{ with } y_k^{(0)}=y_k \label{fordif} \ee
is used, because of its simplicity and applicability for output tracking control. 
The control inputs $u_k$ are then designed so as to track a desired output trajectory $y^d_k=y^d(t_k)$ 
that is continuous and at least $\nu$ times differentiable, as described in Section \ref{sec: mfccon}. 

To make the discrete time output tracking control tractable, we make the additional assumption stated below.
\begin{assumption} \label{assump2}
The unknown input-output system \eqref{dynoutmod} can be expressed in discrete time as 
\be y_k^{(\nu)}= \varpi (y_k,y_k^{(1)},\ldots,y_k^{(\nu-1)},u_k,t_k), \label{discmod} \ee
where $\varpi$ is unknown but continuous, $\nu$ is known and the $\mu$th order finite difference 
$y_k^{(\mu)}$ is as defined by eq. \eqref{fordif}. Further, the desired output trajectory $y^d_k:= y^d(t_k)$ is continuous and $\nu$ times differentiable, and $(y_k^d)^{(\mu)}$ is bounded for $\mu\in\{1,\ldots,\nu\}$. 
\end{assumption}
When the order $\nu$ is unknown, then there are two options available: $\nu$ can be identified using known 
techniques (e.g.,~\cite{heas93, rhmo98}), or a sufficiently high order may be assumed for 
model-free control. 

In practice, the outputs are measured by sensors that usually introduce noise that is modeled as 
additive noise:
\be y^m (t_k)= y^m_k= y_k+ \eta_k, \label{measnoise} \ee
where $\eta_k\in\bR^l$ is the additive noise. In order to filter out measurement noise from the measured 
output signals, we construct a finite-time stable observer in discrete time in Section \ref{sec: ftsobs}. This 
is the first step of the nonlinear model-free control framework formulated here. In the next step, a control  
affine ultra-local model (ULM) is constructed in discrete time and used to estimate the unknown 
dynamics from past input-output data in Section \ref{sec: ulm}. We design first and second order discrete 
time nonlinear observers that estimate the ultra-local model with finite-time stable convergence assuming 
that the uncontrolled part of the ULM is constant. These observers are also shown to be robust to bounded 
rates of change of the uncontrolled part of the ULM for the input-output dynamics. The final part of the 
framework given in Section \ref{sec: mfccon} uses the ULM, along with the ULM observer and the output 
observer, to construct an output feedback control scheme to track a desired output trajectory. 

\section{Model-free finite-time stable observer} \label{sec: ftsobs}

To filter out initialization errors and measurement noise from measured output signals $y^m_k$ as given 
by eq. \eqref{measnoise}, we design a finite-time stable observer that gives robust and stable output 
estimates for output feedback control. In this work, the finite-time stable observer is not discontinuous but 
not Lipschitz continuous either; it is H\"{o}lder continuous. It filters out noise (of unknown statistics) from 
measurements, and in the absence of measurement noise provides finite-time stable convergence of 
output estimates to true outputs. Therefore, this observer can be used to provide continuous output 
feedback for output stabilization or tracking control. The primary benefits of finite-time stable observers in 
our framework for model-free control are two-fold: (1) the added robustness of finite-time stability compared 
to asymptotic stability for nonlinear systems when faced with the same bounds on intermittent or persistent 
disturbances~\cite{bhatFT, bosa_ijc}; and (2) convergence to zero errors in finite time conveniently ensures 
a separation principle is in effect for separate observer and controller designs. We design the finite-time 
stable (FTS) observer in discrete time so that it is suitable for numerical and embedded computer 
implementation. We define the output estimate error in discrete time as
\be e^o_k= e^o(t_k)= \hat y_k - y_k,\; k\in\cI. \label{measmodid} \ee 

The remainder of this section gives a finite-time stable (FTS) observer design in discrete time. The first 
subsection is a basic result on finite-time stability and convergence for discrete-time systems that, to the 
best of our knowledge, has not appeared in past research publications. The second result gives the 
finite-time stable observer design for our nonlinear model-free control framework.

\subsection{Finite-time stability in discrete time}\label{ssec: ftsdisc}

\begin{lemma}\label{discFTSlem}
Consider a discrete-time system with inputs $u_k\in\bR^m$ and outputs $y_k\in\bR^l$. Define a 
corresponding positive definite (Lyapunov) function $V:\bR^l \to \bR$ and let $V_k = V (y_k)$. Let 
$\alpha, \varepsilon$ be constants in the open interval $]0,1[$, let $V_0>0$ 
be the (finite) initial value of the Lyapunov function along an output trajectory $y_k$, and let $\gamma_k:= 
\gamma (V_k)$ where $\gamma: \bR^+ \to\bR^+$ is a positive definite function of $V_k$ 
that satisfies  
\be 
\frac{\gamma_k}{\gamma_0} \ge 1-\varepsilon \mbox{ for } V_k \in\, ] \chi V_0, V_0 [
\label{gammacond} \ee
for some (arbitrarily small) positive constant $0<\chi\ll 1$. Then, if $V_k$ satisfies the relation 
\be V_{k+1}- V_k \le -\gamma_k V_k^\alpha, \label{discFTS} \ee
the system is (Lyapunov) stable and $y_k$ converges to $y=0$ for $k\ge N$, for a finite integer $N\in\bW$.
\end{lemma}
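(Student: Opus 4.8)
The plan is to mirror the continuous-time finite-time Lyapunov argument of \cite{bhatFT} — where $\dot V\le -cV^{\alpha}$ with $\alpha\in\,]0,1[$ forces $V$ to zero in time at most $V_0^{1-\alpha}/((1-\alpha)c)$ via the substitution $W=V^{1-\alpha}$ — in the discrete setting, replacing the chain rule by a concavity (tangent-line) inequality and the integral by a telescoping sum. There are two things to prove: Lyapunov stability, which will follow just from monotonicity of $\{V_k\}$; and finite-time convergence of $y_k$ to $0$, which will follow from a bound on $V_k^{1-\alpha}$ that decreases linearly in $k$ at a strictly positive rate.

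First I would establish Lyapunov stability. Since $V$ and $\gamma$ are positive definite and $V_k^{\alpha}\ge 0$, the right-hand side of \eqref{discFTS} is $\le 0$, with equality only when $V_k=0$; hence $V_{k+1}\le V_k$, so $\{V_k\}$ is non-increasing and $V_k\le V_0$ for all $k$. Positive-definiteness of $V$ then gives Lyapunov stability in the usual way (given a neighbourhood of the origin, choose a sublevel set $\{V\le c\}$ contained in it and a smaller neighbourhood on which $V<c$; trajectories starting there keep $V_k<c$). Monotonicity and boundedness below also give $V_k\downarrow V_\infty\ge 0$, and if $V_\infty>0$ then $\gamma_kV_k^{\alpha}\ge\big(\min_{[V_\infty,V_0]}\gamma\big)V_\infty^{\alpha}>0$ for every $k$, forcing $V_k\to-\infty$, a contradiction; so $V_\infty=0$. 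But asymptotic convergence is weaker than the claim, so the real work is next.

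Next I would prove finite-time convergence. Put $W_k:=V_k^{1-\alpha}$; since $1-\alpha\in\,]0,1[$, the map $s\mapsto s^{1-\alpha}$ is concave on $[0,\infty)$, so its tangent-line bound at $s=V_k$ gives, whenever $V_k>0$,
\[
W_{k+1}-W_k \;=\; V_{k+1}^{1-\alpha}-V_k^{1-\alpha}\;\le\;(1-\alpha)V_k^{-\alpha}\,(V_{k+1}-V_k)\;\le\;-(1-\alpha)\,\gamma_k\,V_k^{-\alpha}V_k^{\alpha}\;=\;-(1-\alpha)\gamma_k .
\]
By monotonicity the first step is strict, $V_1=V_0-\gamma_0V_0^{\alpha}<V_0$, so for every $k\ge 1$ with $V_k>\chi V_0$ one has $V_k\in\,]\chi V_0,V_0[$ and \eqref{gammacond} yields $\gamma_k\ge(1-\varepsilon)\gamma_0$ (the same bound holding trivially at $k=0$). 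Hence $W_{k+1}-W_k\le -(1-\alpha)(1-\varepsilon)\gamma_0=:-\rho<0$ for as long as $V_k$ stays above $\chi V_0$; telescoping gives $W_k\le W_0-\rho k=V_0^{1-\alpha}-\rho k$, so within at most $N_\chi:=\lceil V_0^{1-\alpha}/\rho\rceil$ steps one must have $V_{N_\chi}\le\chi V_0$ — and note $N_\chi$ is bounded by a quantity \emph{independent of} $\chi$.

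The main obstacle — the only genuinely non-routine point — is passing from ``$V_k$ has entered $\{V\le\chi V_0\}$'' to ``$V_k=0$ exactly'', using only the near-$V_0$ control \eqref{gammacond}. I would dispatch this in one of two equivalent ways. (i) Because the step bound $N:=\lceil V_0^{1-\alpha}/\rho\rceil$ does not depend on $\chi$, while \eqref{gammacond} is assumed to hold for arbitrarily small $\chi$, monotonicity of $\{V_k\}$ forces $V_N\le\chi V_0$ for every admissible $\chi$, hence $V_N=0$; then $V_k=0$, i.e.\ $y_k=0$, for all $k\ge N$. (ii) Alternatively, once $V_k$ is small enough that $\gamma_kV_k^{\alpha}\ge V_k$ (which occurs in finitely many steps provided $\gamma$ does not vanish faster than $s^{1-\alpha}$ near $0$, a condition already implicit in \eqref{gammacond} biting at arbitrarily small scales), \eqref{discFTS} together with $V_{k+1}\ge 0$ forces $V_{k+1}=0$. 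I expect the terminal phase to be where care is needed — everything before it is the concavity-plus-telescoping computation above — and the cleanest rigorous version will depend on whether \eqref{gammacond} is read at the initial scale only or self-similarly at every scale reached along the trajectory.
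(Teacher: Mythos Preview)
Your proposal is correct and takes a genuinely different route from the paper. The paper does not pass to $W_k=V_k^{1-\alpha}$; instead it writes $V_0=c_0\,\gamma_0^{1/(1-\alpha)}$ and tracks the scalar recursion $c_{k+1}=c_k-a_k c_k^{\alpha}$ with $a_k:=\gamma_k/\gamma_0$, then uses \eqref{gammacond} to get $c_{k+1}\le c_k-(1-\varepsilon)c_k^{\alpha}$ and argues that once $c_k^{1-\alpha}\le 1-\varepsilon$ the next coefficient is $\le 0$, hence exactly zero by positivity of $V$. Your concavity-plus-telescoping argument on $W_k$ is cleaner and yields an explicit step bound $N\le\lceil V_0^{1-\alpha}/((1-\alpha)(1-\varepsilon)\gamma_0)\rceil$ that directly mirrors the continuous-time estimate of \cite{bhatFT}; the paper's recursion reaches the same endpoint but the step count is left implicit. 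Your terminal-phase option (ii) --- that $\gamma_k V_k^{\alpha}\ge V_k$ forces $V_{k+1}=0$ --- is in fact exactly the paper's overshoot mechanism (the paper's threshold $c_k^{1-\alpha}\le 1-\varepsilon$ rewrites as $V_k^{1-\alpha}\le(1-\varepsilon)\gamma_0\le\gamma_k$, i.e.\ $V_k\le\gamma_k V_k^{\alpha}$), while your option (i), exploiting that the step bound is uniform in $\chi$, is a tidier alternative not in the paper. Both proofs ultimately lean on the same reading of ``arbitrarily small $\chi$'' that you correctly flag at the end.
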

\begin{proof}
Note that inequality \eqref{discFTS} is a sufficient condition for (Lyapunov) stability of the system, as it ensures
that the difference $V_{k+1}- V_k$ along trajectories of the discrete-time system is negative definite. It is 
sufficient to consider the equality case of \eqref{discFTS}, with the right-hand side of the equality being zero 
if and only if $V_k=0$, according to the definition of $\gamma_k$. This equality can be expressed as:
\begin{align} 
\begin{split}
V_{k+1} &= V_k -\gamma_k V_k^\alpha \\
&= V_k\Big( 1- \frac{\gamma_k}{V_k^{1-\alpha}}\Big). 
\end{split}\label{Vkp1} 
\end{align}
Consider an arbitrary trajectory $y_k\in\bR^l$ of the discrete-time system. Let the initial value 
of the Lyapunov function along this trajectory be 
\be V_0= c_0 \big(\gamma_0\big)^{\frac1{1-\alpha}}, \mbox{ where } c_0>0. \label{initLyapf} \ee
Note that for any finite positive value of $V_0$, there exists an unique positive scalar $c_0$ that satisfies 
\eqref{initLyapf}. Substituting this value for $V_0$ in expression \eqref{discFTS}, we obtain:
\be \begin{split} 
V_1 - c_0 \big(\gamma_0\big)^{\frac1{1-\alpha}} &= -\gamma_0 c_0^\alpha  \big(\gamma_0
\big)^{\frac{\alpha}{1-\alpha}} \\
&= -c_0^\alpha \big(\gamma_0\big)^{\frac1{1-\alpha}} \\
\Rightarrow V_1 &= (c_0 -c_0^\alpha) \big(\gamma_0\big)^{\frac1{1-\alpha}}.
\end{split} \label{V1eqn}
\ee
Defining 
\[ c_1 := c_0 -c_0^\alpha, \]
equation \eqref{V1eqn} can be expressed as 
\[ V_1 = c_1 \big(\gamma_0\big)^{\frac1{1-\alpha}}. \]
Note that if $c_0 \le 1$, then the above implies that $c_1\le 0$, which leads to a contradiction unless 
$c_1=0$, as $V_1$ has to be non-negative from the definition of a Lyapunov function. In this case, the 
value of the Lyapunov function already converges to zero in the first step, i.e., for $N=1$. Now suppose 
$c_0>1$. In that case, substituting the above value for $V_1$ in \eqref{discFTS}, one obtains a similar 
expression for $V_2$:
\be V_2 = c_2 \big(\gamma_0\big)^{\frac1{1-\alpha}}\, \mbox{ where }\, c_2 := c_1 - a_1 c_1^\alpha 
\mbox{ and }  a_1 :=\frac{\gamma_1}{\gamma_0}. \label{V2exp} \ee
Continuing in this manner, we get the following expression for $V_{k+1}$ along with a recursive relation for 
the $c_k$ involving the $a_k$: 
\begin{align} 
\begin{split}
&V_{k+1} = c_{k+1} \big(\gamma_0\big)^{\frac1{1-\alpha}} \mbox{ for } k\ge 1, \mbox{ where }  \\
& c_{k+1} := c_k - a_k c_k^\alpha \mbox{ and }  a_k :=\frac{\gamma_k}{\gamma_0}. 
\end{split} \label{Vkp1} 
\end{align}
If $V_k$ is in the range given by \eqref{gammacond}, then according to eq. \eqref{Vkp1} and the 
inequality in \eqref{gammacond}, we have
\begin{align} 
\begin{split}
c_{k+1} &\le c_k - (1-\varepsilon) c_k^\alpha \\
&= \varepsilon c_k^\alpha - (1- c_k^{1-\alpha}) c_k^\alpha.
\end{split} \label{ckp1} 
\end{align}
As $V_{k+1}:=V(y_{k+1})$ is positive definite, $c_{k+1}$ cannot be negative according to eq. \eqref{Vkp1}. 
Further, if $V_k$ is in the range given by \eqref{gammacond}, we have:
\[ \frac{c_k}{c_0}= \frac{V_k}{V_0} \in ]\chi, 1[, \]
where $\chi$ is arbitrarily small; in particular, for $\chi c_0 <1$. From the right side of the inequality 
\eqref{ckp1}, we see that 
\be c_{k+1}\le 0\, \Leftrightarrow\, \varepsilon \le 1- c_k^{1-\alpha}\, \Leftrightarrow\, c_k^{1-\alpha}\le 
1-\varepsilon. \label{ck1mal} \ee
As $c_k\to \chi c_0 <1$ in this interval of $V_k$, there is a finite integer $k=N-1$ for which the inequality in 
\eqref{ck1mal} is satisfied, i.e., $c_{N-1}\le (1-\varepsilon)^{\frac1{1-\alpha}}$; and thus $c_N\le 0$. But 
$c_N\ge 0$ because $V_N\ge 0$ and $\gamma_0>0$. This leads to the conclusion that $c_N=0$. 
Consequently, using eq. \eqref{Vkp1} again, we conclude that $c_j=0$ and $V_j=0$ for $j\ge N$. As a 
result, $y_j$ converges to zero for $j\ge N$, and we have finite-time stability of the system. 
\end{proof}

\begin{remark}
Although the above result is given for a positive definite function $\gamma_k:= \gamma (V_k)$ satisfying 
condition \eqref{gammacond}, it holds trivially for a constant positive $\gamma$ as well. This can be easily 
verified following the first step of the proof above, by substituting $\gamma_0=\gamma=$constant in eq. 
\eqref{initLyapf}, and going through the remainder of the proof with similar arguments; $\varepsilon$ is 
not needed in this case. 
\end{remark}
The conditions given in Lemma \ref{discFTSlem} are not difficult to satisfy, as the following corollary 
shows.
\begin{corollary}\label{dFTScor}
Consider a discrete-time system with a corresponding positive definite (Lyapunov) function $V:\bR^l \to 
\bR$ and let $V_k = V (y_k)$. Let $\alpha, \varepsilon$ be constants as defined in Lemma \ref{discFTSlem}, 
and let $\gamma_k:= \gamma (V_k)$ be a class-$\cK$ function of $V_k$ that is {\bf not} class-$\cK_\infty$. 
Then, if $V_k$ satisfies the relation \eqref{discFTS}, the system is (Lyapunov) stable and $y_k$ converges 
to $y=0$ for $k\ge N$, for some $N\in\bW$.
\end{corollary}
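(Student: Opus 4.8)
The plan is to show that a class-$\cK$ function $\gamma$ which is not class-$\cK_\infty$ automatically satisfies every hypothesis placed on $\gamma$ in Lemma~\ref{discFTSlem}, so that its conclusion transfers verbatim. Two of those hypotheses are immediate from the definition of class-$\cK$: such a $\gamma:\bR^+\to\bR^+$ is continuous with $\gamma(0)=0$ and $\gamma(s)>0$ for $s>0$, hence positive definite, exactly as required; and $\alpha,\varepsilon$ are constants in $]0,1[$ as in the Lemma. The only substantive thing to verify is the gain condition \eqref{gammacond}, namely the existence of a constant $\chi\in\,]0,1[$ with $\gamma(V_k)/\gamma_0\ge 1-\varepsilon$ for every $V_k\in\,]\chi V_0,V_0[$.

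First I would construct $\chi$ explicitly. Since $\gamma$ is continuous and strictly increasing with $\gamma(0)=0$, its restriction to $[0,V_0]$ is a homeomorphism onto $[0,\gamma_0]$, where $\gamma_0=\gamma(V_0)>0$. The number $(1-\varepsilon)\gamma_0$ lies strictly between $0$ and $\gamma_0$, so by the intermediate value theorem there is a unique $V^\ast:=\gamma^{-1}\big((1-\varepsilon)\gamma_0\big)\in\,]0,V_0[$; set $\chi:=V^\ast/V_0\in\,]0,1[$. Then for any $V_k\in\,]\chi V_0,V_0[=\,]V^\ast,V_0[$, strict monotonicity gives $\gamma(V_k)>\gamma(V^\ast)=(1-\varepsilon)\gamma_0$, which is precisely \eqref{gammacond}. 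This is the heart of the argument and it is short: the role of the label ``class-$\cK$'' is to package continuity, strict monotonicity, and $\gamma(0)=0$, which are exactly what the intermediate value theorem and the inversion need.

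The point I expect to require the most care is reconciling the $\chi$ produced above with the description of $\chi$ in Lemma~\ref{discFTSlem} as ``arbitrarily small, $0<\chi\ll1$'', since the proof of that Lemma does exploit smallness of $\chi$ (e.g. through $\chi c_0<1$ with $c_0=V_0/\gamma_0^{1/(1-\alpha)}$, via \eqref{initLyapf}). Here the hypothesis that $\gamma$ is \emph{not} class-$\cK_\infty$ is what I would lean on: $\gamma$ is then bounded, $\gamma(s)\le L<\infty$, so $V^\ast=\gamma^{-1}((1-\varepsilon)\gamma_0)$ stays uniformly controlled, the decrement $\gamma_kV_k^\alpha\le LV_k^\alpha$ in \eqref{discFTS} cannot overwhelm $V_k$ and the Lyapunov inequality remains consistent, and one retains the freedom in the choice of $\varepsilon\in\,]0,1[$ (equivalently of $\chi$) already present in the Lemma — taking $\varepsilon$ closer to $1$ drives $(1-\varepsilon)\gamma_0$, hence $V^\ast$, hence $\chi$, toward $0$ — so that the smallness requirements used in the Lemma's proof are met. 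With $\chi$ so chosen, \eqref{gammacond} holds, and Lemma~\ref{discFTSlem} then delivers Lyapunov stability together with convergence of $y_k$ to $y=0$ for all $k\ge N$ with $N\in\bW$, which is the assertion of the corollary.
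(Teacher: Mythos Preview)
Your approach is essentially the same as the paper's: reduce to Lemma~\ref{discFTSlem} by checking that a class-$\cK$ function that is not class-$\cK_\infty$ satisfies condition~\eqref{gammacond}. The paper's own argument is a single sentence asserting this is ``immediate,'' whereas you supply the explicit construction of $\chi$ via the intermediate value theorem and discuss why the smallness of $\chi$ used inside the Lemma's proof can still be arranged; your version is strictly more detailed but not a different route.
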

The proof of this corollary is immediate, because if $\gamma_k$ is class-$\cK$ but {\bf not} 
class-$\cK_\infty$, then it clearly satisfies condition \eqref{gammacond} of Lemma \ref{discFTSlem} for values
of the ratio $\frac{V_k}{V_0}$ in the open interval $]0,1[$. 

The following subsection gives the main result of this section in the form of a finite-time stable output 
observer in discrete time. 

\subsection{Finite-time stable output observer}\label{ssec: ftsobss}

Define the discrete-time Lyapunov function for the output observer as:
\be V^o(e^o_k) = V^o_k = \frac12 (e^o_k)\T L e^o_k, \label{discobsLyapf} \ee
where $L=L\T$ is positive definite. The total time difference of this discrete 
Lyapunov function in the time interval $[t_k, t_{k+1}]$ is then obtained as
\begin{align} 
V_{k+1}^o - V_k^o &= \frac12 (e^o_{k+1})\T L e^o_{k+1} - \frac12 (e^o_k)\T L e^o_k \nn \\
&= \frac12 \big(e^o_{k+1}- e^o_k\big)\T L \big(e^o_{k+1}+ e^o_k\big). \label{Vokdiff} 
\end{align}
An asymptotically stable observer can be designed as follows:
\begin{align} 
e^o_{k+1}= e^o_k - \beta (e^o_k + e^o_{k+1}) \mbox{ or } e^o_{k+1}= \frac{1-\beta}{1+\beta} e^o_k,
\label{asympt-obs}
\end{align} 
where $\beta>0$ is a positive constant gain. The following result gives a finite-time stable output observer 
in discrete time.

\begin{theorem}\label{dFTSobs}
Let $e^o_k$ be as defined in \eqref{measmodid} and let $L$, $\beta$ be as defined in eqs. 
\eqref{discobsLyapf}-\eqref{asympt-obs}, and let $p\in ]1,2[$. Consider the discrete-time observer 
given by:
\begin{align} 
\begin{split}
&\hat y_{k+1}= y_{k+1}+ \mathcal{B}(e^o_k) e^o_k, \, \mbox{ where } \\ 
&\cB (e^o_k)= \frac{\big((e^o_k)\T Le^o_k\big)^{1-1/p} - \beta}{\big((e^o_k)\T Le^o_k\big)^{1-1/p} + \beta}. 
\end{split} \label{dFTS-obs} 
\end{align}
The observer law \eqref{dFTS-obs} leads to a (Lyapunov) stable observer with convergence of the 
output estimation errors to zero for $k\ge N$ and finite $N\in\bW$.
\end{theorem}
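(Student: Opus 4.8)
The plan is to show that the observer law \eqref{dFTS-obs} produces an error recursion of exactly the form required by Lemma \ref{discFTSlem}, and then invoke that lemma. First I would substitute the observer update into the definition \eqref{measmodid}: since $\hat y_{k+1}= y_{k+1}+ \cB(e^o_k) e^o_k$, we get $e^o_{k+1}= \hat y_{k+1}- y_{k+1}= \cB(e^o_k) e^o_k$, i.e. the error dynamics decouple completely from the (unknown) output dynamics — this is the payoff of writing $\hat y_{k+1}$ in terms of $y_{k+1}$ rather than a predicted output. Writing $s_k:= \big((e^o_k)\T L e^o_k\big)^{1-1/p}= (2V^o_k)^{1-1/p}$, the gain is $\cB(e^o_k)= \frac{s_k-\beta}{s_k+\beta}$, which is the natural finite-time analogue of the asymptotically stable gain $\frac{1-\beta}{1+\beta}$ in \eqref{asympt-obs} with the constant $1$ replaced by the state-dependent quantity $s_k$.

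Next I would compute the Lyapunov difference. From $e^o_{k+1}= \cB(e^o_k) e^o_k$ we have $V^o_{k+1}= \cB(e^o_k)^2 V^o_k$, so
\begin{align}
\begin{split}
V^o_{k+1}- V^o_k &= \big(\cB(e^o_k)^2 - 1\big) V^o_k \\
&= \frac{(s_k-\beta)^2 - (s_k+\beta)^2}{(s_k+\beta)^2}\, V^o_k \\
&= -\frac{4\beta s_k}{(s_k+\beta)^2}\, V^o_k.
\end{split}
\end{align}
Now recall $s_k= (2V^o_k)^{1-1/p}$ and set $\alpha := 1/p\in\,]\tfrac12,1[\subset\,]0,1[$, so that $s_k V^o_k = (2V^o_k)^{1-1/p} V^o_k = 2^{1-\alpha} (V^o_k)^{2-\alpha}$; hence $s_k V^o_k$ is proportional to $(V^o_k)^{2-\alpha}= (V^o_k)^{1+(1-\alpha)} $. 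Since the Lyapunov difference must match $-\gamma_k (V^o_k)^{\tilde\alpha}$ for some exponent $\tilde\alpha\in\,]0,1[$, I would write
\begin{align}
V^o_{k+1}- V^o_k = -\underbrace{\frac{4\beta\, 2^{1-\alpha}}{(s_k+\beta)^2}\, (V^o_k)^{1-\alpha}}_{=:\ \gamma_k}\ (V^o_k)^{\tilde\alpha},
\end{align}
where, choosing $\tilde\alpha= 1-\alpha= 1-1/p\in\,]0,1[$, the cofactor $\gamma_k= \gamma(V^o_k):= \frac{4\beta\, 2^{1-\alpha}}{\big((2V^o_k)^{1-1/p}+\beta\big)^2}\,(V^o_k)^{1-\alpha}$ is a positive definite function of $V^o_k$. (One should double-check the exponent bookkeeping: the key identity is that $\frac{4\beta s_k}{(s_k+\beta)^2} V^o_k$ splits as $(\text{positive function of }V^o_k)\cdot (V^o_k)^{1-1/p}$, which is all Lemma \ref{discFTSlem} needs.)

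It remains to verify the hypotheses of Lemma \ref{discFTSlem}: that $\alpha$ (in the lemma's sense, here $\tilde\alpha= 1-1/p$) lies in $]0,1[$ — immediate from $p\in\,]1,2[$ — and that $\gamma_k$ satisfies the ratio condition \eqref{gammacond}, namely $\gamma_k/\gamma_0\ge 1-\varepsilon$ for $V^o_k\in\,]\chi V^o_0, V^o_0[$. Since $\gamma$ is continuous and strictly positive on the compact interval $[\chi V^o_0, V^o_0]$, it attains a positive minimum there, so for $V^o_k$ ranging over $]\chi V^o_0,V^o_0[$ the ratio $\gamma(V^o_k)/\gamma(V^o_0)$ is bounded below by a positive constant; picking $\chi$ small (equivalently $\varepsilon$ close to the gap) makes this bound $\ge 1-\varepsilon$, which is exactly \eqref{gammacond}. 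With \eqref{discFTS} holding as an equality (with $\le$), Lemma \ref{discFTSlem} then gives Lyapunov stability and $e^o_k\to 0$, hence $\hat y_k\to y_k$, for $k\ge N$ with finite $N\in\bW$. The main obstacle I anticipate is purely the exponent bookkeeping — making sure $s_k V^o_k$ is correctly identified as $(\text{pos.\ def.\ in }V^o_k)\times (V^o_k)^{1-1/p}$ and that the resulting $\gamma_k$ genuinely vanishes only at $V^o_k=0$ and meets \eqref{gammacond} — together with handling the degenerate regime where $s_k<\beta$ makes $\cB(e^o_k)<0$; there $|\cB(e^o_k)|<1$ still holds (since $|s_k-\beta|<s_k+\beta$ for $s_k,\beta>0$), so $V^o_{k+1}<V^o_k$ and the sign computation above is unaffected, but it is worth a remark that the square $\cB(e^o_k)^2$ makes the analysis insensitive to that sign.
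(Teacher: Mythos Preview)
Your approach is essentially the paper's: derive $e^o_{k+1}=\cB(e^o_k)e^o_k$, compute the Lyapunov decrement, and reduce to Lemma~\ref{discFTSlem}. One slip to fix: from $s_kV^o_k=2^{1-1/p}(V^o_k)^{2-1/p}$ and your choice $\tilde\alpha=1-1/p$, the leftover factor in $\gamma_k$ must be $(V^o_k)^{1}$, not $(V^o_k)^{1-\alpha}$; your parenthetical ``key identity'' is right, but the displayed $\gamma_k$ is off by a power of $(V^o_k)^{1/p}$. The paper instead splits with exponent $\alpha=1/p$, obtaining $\gamma_k=4\beta\,2^{1-1/p}(V^o_k)^{2-2/p}\big/\big((2V^o_k)^{1-1/p}+\beta\big)^2$, which is class-$\cK$ but bounded above (hence not class-$\cK_\infty$), so Corollary~\ref{dFTScor} applies directly and the lower bound on $\gamma_k/\gamma_0$ is written out explicitly. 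Your split (once corrected) yields a class-$\cK_\infty$ function, so you genuinely need Lemma~\ref{discFTSlem} together with the compactness argument; that works too, but the paper's choice of exponent is the cleaner route for exactly this reason.
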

\begin{proof}
The observer law \eqref{dFTS-obs} is equivalent to:
\be e^o_{k+1}= \mathcal{B}(e^o_k) e^o_k, \label{eqvobslaw} \ee
which gives the discrete time evolution of the output estimate error according to this observer. This can be 
re-expressed as:
\be \big((e^o_k)\T Le^o_k\big)^{1-1/p} (e^o_{k+1}-e^o_k)= -\beta (e^o_{k+1}+e^o_k). \label{eokdiff} \ee
Consider the discrete-time Lyapunov function $V^o_k$ defined by \eqref{discobsLyapf}. The difference 
between the values of this function at successive sampling instants is given by eq. \eqref{Vokdiff}. 
Substituting for $e^o_{k+1}-e^o_k$ from \eqref{eokdiff} into eq. \eqref{Vokdiff}, we get:
\be V_{k+1}^o - V_k^o = -\frac{\beta}{2} \frac{(e^o_{k+1}+ e^o_k)\T L (e^o_{k+1}+ e^o_k)}{\big((e^o_k)\T L
e^o_k\big)^{1-1/p}}. \label{diffVko} \ee
Note that $e^o_{k+1}+e^o_k=\big(1+\cB(e_k^o)\big)e^o_k$, and the right side of expression \eqref{diffVko} 
is zero if and only if
\[ e^o_{k+1}= -e^o_k, \]
which is possible if and only if $\mathcal{B}(e^o_k)=-1$, according to \eqref{eqvobslaw}. From the expression 
for $\mathcal{B}(e^o_k)$ in \eqref{dFTS-obs}, we see that $\mathcal{B}(e^o_k)=-1$ if and only if $e^o_k=0$. 
Therefore, from eqs. \eqref{dFTS-obs} and \eqref{eqvobslaw}, we see that
\[ V_{k+1}^o - V_k^o = 0 \, \Leftrightarrow\, e^o_k= 0. \]
 
Now substituting eq. \eqref{eqvobslaw} into the right side of eq. \eqref{diffVko} and noting that 
$(e^o_k)\T Le^o_k= 2V^o_k$, we get 
\begin{align}  
\begin{split}
V_{k+1}^o - V_k^o & = -\gamma_k \big( V_k^o\big)^{1/p}, \,\mbox{ where } \\
\gamma_k &= \frac{\beta}{2^{1-1/p}}\big(1+\cB(e_k^o)\big)^2. 
\end{split}\label{dVko2}
\end{align}
Substituting $(e^o_k)\T Le^o_k= 2V^o_k$ into the expression for $\mathcal{B}(e^o_k)$ to evaluate  
$\gamma_k$ in eq. \eqref{dVko2}, we get 
\be \gamma_k = 4\beta \frac{ 2^{1-1/p}(V^o_k)^{2-2/p}}{\big( (2V^o_k)^{1-1/p} +\beta\big)^2}. \label{gamk} \ee
Clearly, $\gamma_k$ as given by eq. \eqref{gamk} is a class-$\cK$ function of $V^o_k$. 
From eqs. \eqref{dVko2} and \eqref{gamk}, we see that $V^o_k$ is monotonously decreasing if 
$\gamma_k>0$ and 
\[ 0< \gamma_k < \frac{4\beta}{2^{1-1/p}}\ \mbox{ for }\ 0 < 2V_k^o <\infty. \]
Therefore $\gamma_k$ is class-$\cK$ but not class-$\cK_\infty$, and therefore satisfies the stronger condition 
of Corollary \ref{dFTScor}. To explicitly show this, from eq. \eqref{gamk} we obtain the ratio: 
\be 
a_k:=\frac{\gamma_k}{\gamma_0}= \frac{(V^o_k)^{2-2/p}}{(V^o_0)^{2-2/p}}\frac{\big( (2V^o_0)^{1-1/p} +
\beta\big)^2}{\big( (2V^o_k)^{1-1/p} +\beta\big)^2} <1.
\label{gamVk} \ee
For values of the ratio $\frac{V_k^o}{V_0^o}$ in the open interval $]\chi,1[$ where $0<\chi\ll 1$, the ratio in eq. 
\eqref{gamVk} is bounded below according to:
\begin{align}
\begin{split}
a_k & > \frac{\chi^{1-1/p} (1+\mu)}{\chi^{1-1/p}+ \mu}  \\
& = \Big\{ 1- \frac{\mu (1- \chi^{1-1/p})}{\chi^{1-1/p}+ \mu} \Big\}^2
\end{split} \mbox{ where } \mu= \frac{\beta}{(2V_0^o)^{1-1/p}}. \label{aklowbnd} 
\end{align}
This guarantees the existence of $\varepsilon\in\, ]0,1[$ that satisfies the condition \eqref{gammacond} 
in the statement of Lemma \ref{discFTSlem} for $V^o_k\in ]\chi V_0^o, V^o_0[$, and is given by:
\[ \varepsilon= 2\delta -\delta^2,  \mbox{ where } 0<\delta= \frac{\mu (1- \chi^{1-1/p})}{\chi^{1-1/p}+ \mu} <1. \] 
Therefore, $V^o_k=0$ for $k\ge N$ for some finite $N\in\bW$, so this discrete-time nonlinear observer 
ensures finite-time stable convergence of output estimation errors to zero. 
\end{proof}

As remarked before, finite-time stability is advantageous compared to asymptotic stability for added  
robustness to disturbances and noise in the measurements $y_k$. Moreover, finite-time stability of 
the observer facilitates a separation between observer design and controller design. This observer is 
also robust to initial estimate errors (i.e., magnitude of $e^o_0$) if no initial measurements are available 
or if there is poor knowledge of initial output.

\section{The Ultra-Local Model and Its Estimation}\label{sec: ulm}
\subsection{Ultra-Local Model for Unknown Input-Output Dynamics}\label{ssec: ulmd}
The model-free control approach of~\cite{FlJo13} relates the unknown model of the dynamics to an 
order $\nu$ ``ultra-local model." Here, we generalize the ultra-local model to the form:
\be y_k^{(\nu)} = \cF_k +\cG_k  u_k,\, \mbox{ where }\, \cF_k\in\bR^l,\; u_k\in\bR^m, \label{ultramodel} \ee
and $\cG_k \in \bR^{l\times m}$ is a full rank matrix that is selected appropriately, as part of the 
controller design. However, the approach of~\cite{FlJo13} deals with SISO systems using techniques 
from classical control and does not consider stability or robustness of the feedback control system. In 
contrast, the approach given here is centered around provable guarantees on nonlinear stability and 
robustness to external disturbances and measurement noise. To do this in an effective manner, the 
unknown input-output dynamics, captured by $\cF_k\in\bR^l$ in eq. \eqref{ultramodel}, needs to be 
estimated in a stable and robust manner. We therefore consider the following problem. 
 
\begin{problem}\label{prob1}
Consider the unknown nonlinear system \eqref{dynoutmod} satisfying Assumptions \ref{assump1} and 
\ref{assump2}, with control inputs $u_k:=u(t_k)\in\bR^m$ provided at 
discrete sample times $t_k$. Given the discrete time ultra-local model \eqref{ultramodel} of the input-output  
dynamics with unknown $\cF_k$, estimate $\cF_k$ from past input-output history and design a feedback 
control scheme to track the desired output trajectory $y^d_k:= y^d(t_k)$ in a nonlinearly stable manner.  
\end{problem} 

Note that as per Assumption \ref{assump1}, the system is input-output controllable. 
In the following subsections of this section, we design two nonlinear observers to estimate $\cF_k$ for later 
use in the output feedback tracking control scheme. These schemes (in isolation) can also be used to identify 
this unknown dynamics using known (feedforward) control inputs $u_k$ and influence matrix $\cG_k$. Such 
a situation can be useful in applications where the control parameterization is well-known, but the dynamics 
is influenced by external disturbances or internal parameters that are unknown. Note that the model given 
by \eqref{ultramodel} is a generalization of the ultra-local model of~\cite{FlJo13}, where $\cG_k$ was a 
constant scalar and only single-input single-output (SISO) systems were considered. 


\subsection{Estimation of Unknown Input-Output Dynamics Using a First Order Observer}\label{ssec: estF1st}
The model-free intelligent PID (iPID) control framework of~\cite{FlJo13} does not provide a nonlinearly 
stable observer scheme to estimate the unknown input-output dynamics that is not directly influenced 
by the control inputs. Here, we provide a first-order observer for this unknown dynamics, i.e.,  
$\cF_k$ in eq. \eqref{ultramodel}. The idea here is to use the finite-time stable output observer design 
outlined in the previous section in conjunction with a first-order hold to estimate the unknown dynamics 
expressed by $\cF_k$ in eq. \eqref{ultramodel} based on past input-output history. Note that the control 
law for $u_{k+1}$ cannot be based on knowledge of $\cF_{k+1}$ which is unknown due to causality; 
but it can use past information on $\cF_j$ for $j\in\{0,\ldots,k\}$. The control law uses the predicted value 
of $\cF_k$, which we denote $\hat\cF_k$, to construct the control $u_k$. 

Define the estimation error in estimating $\cF_k$ as follows:
\be e^\cF_k := \hat\cF_k- \cF_k. \label{esterrF} \ee
The following result gives a first order (discrete time) nonlinearly stable observer for the unknown 
dynamics $\cF_k$. 
\begin{proposition}\label{prop1st}
Let $e^\cF_k$ be as defined by eq. \eqref{esterrF}, and let $r \in ]1,2[$ and $\lambda >0$ be constants. 
Let the first order finite difference of the unknown quantity $\cF_k$, given by 
\be \Delta\cF_k:= \cF_k^{(1)}= \cF_{k+1}- \cF_k, \label{DelF} \ee
be bounded. Consider the nonlinear observer given by:
\begin{align}
\begin{split}
&\hat\cF_{k+1} = \cD (e^\cF_k) e^\cF_k + \cF_k, \\
&\mbox{where }  \cD (e^\cF_k )= \frac{\big((e^\cF_k)\T e^\cF_k\big)^{1-1/r} -\lambda}{\big((e^\cF_k)\T 
e^\cF_k\big)^{1-1/r} +\lambda}.
\end{split} \label{Fest1st}
\end{align}
This observer leads to finite time stable convergence of the estimation error vector $e^\cF_k \in\bR^l$ to a 
bounded neighborhood of zero, where the bounds are given by the bounds on $\Delta\cF_k$.  
\end{proposition}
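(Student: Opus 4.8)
The plan is to mirror the structure of the proof of Theorem~\ref{dFTSobs}, but now tracking the extra forcing term $\Delta\cF_k$ that prevents exact convergence to zero. First I would rewrite the observer law \eqref{Fest1st} in error coordinates. Subtracting $\cF_{k+1}$ from both sides of the first line of \eqref{Fest1st} and using \eqref{DelF}, one gets
\[ e^\cF_{k+1} = \hat\cF_{k+1} - \cF_{k+1} = \cD(e^\cF_k)\, e^\cF_k + \cF_k - \cF_{k+1} = \cD(e^\cF_k)\, e^\cF_k - \Delta\cF_k. \]
This is exactly the recursion \eqref{eqvobslaw} from Theorem~\ref{dFTSobs} (with $L = \I$, $p$ replaced by $r$, $\beta$ by $\lambda$) but perturbed by the additive disturbance $-\Delta\cF_k$. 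Equivalently, $\big((e^\cF_k)\T e^\cF_k\big)^{1-1/r}(e^\cF_{k+1} - e^\cF_k + \Delta\cF_k) = -\lambda(e^\cF_{k+1} + e^\cF_k - \Delta\cF_k)$.

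Next I would introduce the Lyapunov function $W_k := \tfrac12 (e^\cF_k)\T e^\cF_k$ and compute $W_{k+1} - W_k = \tfrac12(e^\cF_{k+1} - e^\cF_k)\T(e^\cF_{k+1} + e^\cF_k)$, substituting the perturbed recursion above. The idea is that when $\|e^\cF_k\|$ is large compared to the bound $\delta_\cF := \sup_k \|\Delta\cF_k\|$ on the disturbance, the disturbance-free part of the right-hand side dominates and reproduces a relation of the form $W_{k+1} - W_k \le -\tilde\gamma_k W_k^{1/r} + (\text{terms controlled by }\delta_\cF)$, with $\tilde\gamma_k$ a class-$\cK$ (not class-$\cK_\infty$) function of $W_k$ exactly as in \eqref{gamk}. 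Invoking Lemma~\ref{discFTSlem} / Corollary~\ref{dFTScor} for the region where the disturbance term is negligible, $W_k$ decreases in finite-time-stable fashion until $e^\cF_k$ enters the ball $\{ \|e^\cF_k\| \le c\,\delta_\cF\}$ for an explicit constant $c$ depending on $\lambda$ and $r$; thereafter one shows this ball is (forward) invariant, so the error stays in a bounded neighborhood of zero whose radius is governed by the bound on $\Delta\cF_k$ — which is precisely the claim. When $\delta_\cF = 0$ this degenerates to the finite-time convergence to zero of Theorem~\ref{dFTSobs}.

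The main obstacle I anticipate is making the case split clean: one must produce an honest estimate of the cross terms $(e^\cF_{k+1} - e^\cF_k)\T \Delta\cF_k$ and $\|\Delta\cF_k\|^2$ relative to $-\tilde\gamma_k W_k^{1/r}$, and then pin down the threshold radius at which the sign of $W_{k+1}-W_k$ can no longer be guaranteed negative. Because $\cD(e^\cF_k) \to -1$ as $e^\cF_k \to 0$, the factor $(1+\cD(e^\cF_k))$ appearing in $\tilde\gamma_k$ vanishes near the origin, so the decay rate degrades exactly where the disturbance matters; the argument must therefore balance these two effects to extract the explicit neighborhood, rather than merely asserting ultimate boundedness. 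I would also need to check that the argument of the odd power $1-1/r \in\, ]0,\tfrac12[$ stays well-defined (it does, since $(e^\cF_k)\T e^\cF_k \ge 0$), and handle the degenerate step $e^\cF_k = 0$ separately as in the earlier proof, where the recursion gives $e^\cF_{k+1} = -\Delta\cF_k$, already inside the target neighborhood.
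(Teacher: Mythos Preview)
Your proposal is correct and follows essentially the same approach as the paper: derive the perturbed error recursion $e^\cF_{k+1}=\cD(e^\cF_k)e^\cF_k-\Delta\cF_k$, recognize it as the ideal finite-time-stable dynamics of Theorem~\ref{dFTSobs} (with $L=I$, $p\to r$, $\beta\to\lambda$) plus a bounded disturbance, and conclude ultimate boundedness from the robustness of the FTS scheme. The paper's version is actually terser than yours---it first establishes FTS of the unperturbed recursion via the Lyapunov function $V^\cF_k=(e^\cF_k)\T e^\cF_k$ and Corollary~\ref{dFTScor}, then simply asserts the perturbation conclusion without the explicit cross-term and invariance bookkeeping you (rightly) flag as the delicate part.
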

\begin{proof}
The proof of this result begins by showing that if 
\be e^\cF_{k+1}= \cD (e^\cF_k) e^\cF_k, \label{FestFTS} \ee
where $\cD (e^\cF_k)$ is as defined by eq. \eqref{Fest1st}, then $e^\cF_k$ converges to zero in a 
finite-time stable (FTS) manner. This can be shown by defining the discrete-time Lyapunov function
\[ V^\cF_k := (e^\cF_k)\T e^\cF_k. \]
Taking the discrete time difference of this Lyapunov function, we get
\begin{align} 
\begin{split}
&V^\cF_{k+1}- V^\cF_k = -\gamma^\cF_k (V^\cF_k)^{1/r} \\
&\mbox{where } \gamma^\cF_k= \lambda\big( 1+\cD (e^\cF_k)\big)^2.
\end{split} \label{diffVcF} 
\end{align}
It can be easily verified (in a manner similar to that for $\gamma_k$ in the proof of Theorem \ref{dFTSobs}) 
that this $\gamma^\cF_k$ satisfies the sufficient condition of Corollary \ref{dFTScor} for finite-time stability 
of $e^\cF_k$. Using the definition of $e^\cF_k$ given by eq. \eqref{esterrF} and the relation \eqref{FestFTS}, 
one obtains the following observer for $\hat\cF_k$:
\be \hat\cF_{k+1}= \cD (e^\cF_k) e^\cF_k + \cF_{k+1}. \label{idealFest} \ee
However, as mentioned earlier, $F_{k+1}$ is not available at time $t_{k+1}$ due to causality; therefore, it 
needs to be replaced by a known quantity. This first order observer design given by eq. \eqref{Fest1st} 
replaces $F_{k+1}$ in eq. \eqref{idealFest} with $F_k$. As a result, the estimation error $e^\cF_k$ evolves 
according to:
\begin{align}
\begin{split} 
 &e^\cF_{k+1} := \hat\cF_{k+1}- \cF_{k+1} = \cD (e^\cF_k) e^\cF_k + \cF_k- \cF_{k+1} \\
 &= \cD (e^\cF_k) e^\cF_k - \Delta\cF_k,\, \mbox{ where } \Delta\cF_k=\cF_{k+1}-\cF_k. 
\end{split}\label{err1st}
\end{align}
Therefore this observer is a first order perturbation of the ideal FTS observer design for $\cF_k$ as 
given by eq. \eqref{idealFest}, with the perturbation coming from the first oder difference term 
$\Delta\cF_k$. Due to the FTS behavior of this ideal observer for $\cF_k$, the first order observer 
design of eq. \eqref{Fest1st} will converge to a neighborhood of $e^\cF_k=0$, where the size of this 
neighborhood is given by the bounds on $\Delta\cF_k$. For example, if $\|\Delta\cF_k\|$ is bounded 
by a known constant, then $e^\cF_k$ will remain bounded by this constant after a finite time interval. 
Clearly, the smaller the bounds on $\Delta\cF_k$, the smaller the neighborhood of $e^\cF_k=0$ that 
this observer will converge to within finite time. 
\end{proof}

\begin{remark}
This first oder observer can become unstable if $\Delta\cF_k$ escapes (becomes unbounded) in 
finite time at a rate faster than that dictated by the design of $\cD (e^\cF_k)$. However, the design  
of a model-free control scheme for such a system is beyond the scope of this work.
\end{remark}

\begin{remark}
For use in conjunction with the FTS output observer given by Theorem \ref{dFTSobs} for feedback control, 
the gain parameters $\alpha$ and $\lambda$ in $\cD(e^\cF_k)$ can be designed such that the convergence 
rate of the ideal FTS observer for $\cF_k$ (given by \eqref{idealFest}) is slower than that of the FTS output 
observer.
\end{remark}

\subsection{Estimation of Unknown Input-Output Dynamics Using a Second Order 
Observer}\label{ssec: estF2nd}
In this subsection, we design a second order observer for $\cF_k$ based on the developments in the 
previous subsection. To start the design process, we assume an internal dynamics model for $\cF_k$ 
given by:
\be \cF_{k+1}=\cF_k + \Delta\cF_k, \label{intmodcF} \ee
where $\Delta\cF_k$ is as defined in eq. \eqref{DelF}. The second order observer design is based on 
the above model, as follows:
\be \hat\cF_{k+1}= \hat\cF_k +\Delta\hat\cF_k, \label{est2nd} \ee 
where $\Delta\hat\cF_k$ is the estimate of $\Delta\cF_k$. In addition, define the error in estimating 
$\Delta\cF_k$ as follows:
\be e^\Delta_k := \Delta\hat\cF_k - \Delta\cF_k. \label{erDelcF} \ee
The following result gives the second order observer for $\cF_k$ based on a particular selection for 
$\Delta\hat\cF_k$. 

\begin{proposition}\label{prop2nd}
Let $e^\Delta_k$ be as defined by eq. \eqref{erDelcF}, and $e^\cF_k$, $\alpha\in ]0,1[$ and $\lambda >0$ 
be as defined in Proposition \ref{prop1st}. Further, let $\cD (\cdot)$ be as defined by eq. \eqref{Fest1st} 
in Proposition \ref{prop1st}, and let the second order finite-time difference given by: 
\be \Delta^2\cF_k:= \cF_{k-1}^{(2)}= \cF_{k+1}- 2\cF_k +\cF_{k-1} \label{Del2F} \ee
be bounded. Consider the nonlinear observer given by:
\begin{align}
\begin{split}
&\hat\cF_{k+1} = \cD (e^\cF_k) e^\cF_k + F_k + \Delta\hat\cF_k, \\
&\mbox{where }  \Delta\hat\cF_k= \cD (e^\Delta_{k-1}) e^\Delta_{k-1} + \Delta\cF_{k-1}.
\end{split} \label{Fest2nd}
\end{align}
This observer leads to finite time stable convergence of the estimation errors $e^\cF_k, e^\Delta_k \in
\bR^l$ to bounded neighborhoods of zero, where the bounds are given by bounds on  $\Delta^2\cF_k$. 
\end{proposition}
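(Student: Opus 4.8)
The plan is to follow the structure of the proof of Proposition~\ref{prop1st}, but organized as a \emph{cascade} of two first-order perturbed finite-time stable observers. First I would write down the error dynamics. Subtracting the internal model \eqref{intmodcF} from the observer \eqref{est2nd}, and using \eqref{Fest2nd} together with the definitions \eqref{esterrF} and \eqref{erDelcF}, one obtains
\begin{align}
\begin{split}
e^\cF_{k+1} &= \cD (e^\cF_k) e^\cF_k + e^\Delta_k, \\
e^\Delta_k &= \cD (e^\Delta_{k-1}) e^\Delta_{k-1} - \Delta^2\cF_k,
\end{split} \label{err2nderr}
\end{align}
where in the second line I substituted $\Delta\hat\cF_k$ from \eqref{Fest2nd} and used $\Delta\cF_{k-1} - \Delta\cF_k = -\Delta^2\cF_k$, which follows from \eqref{DelF} and \eqref{Del2F}. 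The first line of \eqref{err2nderr} says that $e^\cF$ obeys a finite-time stable observer recursion of the form \eqref{FestFTS} driven by the exogenous term $e^\Delta_k$; the second line says that $e^\Delta$ is exactly a first-order perturbation of an ideal finite-time stable observer of the type treated in Proposition~\ref{prop1st}, with the bounded quantity $\Delta^2\cF_k$ playing the role of the perturbation.

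Second, I would apply the argument of Proposition~\ref{prop1st} verbatim to the $e^\Delta$ subsystem. With the discrete-time Lyapunov function $V^\Delta_k := (e^\Delta_k)\T e^\Delta_k$, the unperturbed recursion $e^\Delta_k = \cD (e^\Delta_{k-1}) e^\Delta_{k-1}$ yields $V^\Delta_k - V^\Delta_{k-1} = -\gamma^\Delta_{k-1} (V^\Delta_{k-1})^{1/r}$ with $\gamma^\Delta_{k-1} = \lambda\big(1 + \cD (e^\Delta_{k-1})\big)^2$, where $r$ and $\lambda$ are the constants appearing in $\cD$; exactly as for $\gamma_k$ in the proof of Theorem~\ref{dFTSobs} and for $\gamma^\cF_k$ in Proposition~\ref{prop1st}, this $\gamma^\Delta_{k-1}$ is a class-$\cK$ but not class-$\cK_\infty$ function of $V^\Delta_{k-1}$, so it satisfies the hypothesis of Corollary~\ref{dFTScor}. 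Hence, by the perturbation argument of Proposition~\ref{prop1st}, the perturbed $e^\Delta_k$ in \eqref{err2nderr} enters a bounded neighborhood of zero after a finite number of steps $N_1\in\bW$, with the radius of that neighborhood controlled by the bound on $\Delta^2\cF_k$.

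Third, for $k\ge N_1$ I would regard $e^\Delta_k$ in the first line of \eqref{err2nderr} as a bounded exogenous perturbation. The $e^\cF$ recursion is then once more a first-order perturbed finite-time stable observer, structurally identical to the one in Proposition~\ref{prop1st} with $e^\Delta_k$ in place of $-\Delta\cF_k$; consequently $e^\cF_k$ converges in a finite number of additional steps to a bounded neighborhood of zero whose radius is set by the size of $e^\Delta_k$ for $k\ge N_1$, and hence is again controlled by the bound on $\Delta^2\cF_k$. Taking $N\in\bW$ to be the larger of the two finite times then gives the stated conclusion for both $e^\cF_k$ and $e^\Delta_k$.

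The main obstacle is making the cascade rigorous: one must be careful that it is precisely the finite-time convergence of $e^\Delta$ that permits treating it as a genuinely bounded input driving the $e^\cF$ recursion, so that the neighborhood bound on $e^\cF$ can only be asserted for $k$ beyond the settling time $N_1$ of $e^\Delta$; and one must keep track of how the radius of the final $e^\cF$-neighborhood inherits its dependence on the bound on $\Delta^2\cF_k$ through the radius of the $e^\Delta$-neighborhood. The remaining point --- re-verifying that $\gamma^\Delta_{k-1}$ is class-$\cK$ but not class-$\cK_\infty$ in $V^\Delta_{k-1}$ --- is routine, being identical in form to the computation in \eqref{gamk}--\eqref{aklowbnd}; everything else is a direct transcription of the proof of Proposition~\ref{prop1st}.
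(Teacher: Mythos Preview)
Your proposal is correct and takes essentially the same approach as the paper: both derive the same error dynamics from \eqref{Fest2nd} and then invoke the perturbation argument of Proposition~\ref{prop1st}. The only difference is presentational --- the paper substitutes your second equation into your first to obtain the single combined relation $e^\cF_{k+1} = \cD (e^\cF_k) e^\cF_k +\cD (e^\Delta_{k-1}) e^\Delta_{k-1} - \Delta^2\cF_k$ (its eq.~\eqref{err2nd}) and then defers in one line to Proposition~\ref{prop1st} ``with $\Delta\cF_k$ replaced by $\Delta^2\cF_k$,'' whereas you keep the two recursions separate and run the cascade argument explicitly. Your organization is actually a bit more careful about why \emph{both} $e^\Delta_k$ and $e^\cF_k$ end up in neighborhoods controlled by the bound on $\Delta^2\cF_k$, which the proposition statement demands; the paper leaves the $e^\Delta_k$ half of that claim implicit.
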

\begin{proof}
The proof of this result starts by noting that the ideal FTS observer law for $\cF_k$ given by eq. 
\eqref{idealFest} can also be expressed as:
\be \hat\cF_{k+1}= \cD (e^\cF_k) e^\cF_k + \cF_k+ \Delta\cF_k, \label{idealFest2} \ee
because the last two terms on the right side of this expression add up to $\cF_{k+1}$. The second order 
observer law given by eq. \eqref{Fest2nd} is obtained by replacing $\Delta\cF_k$ on the RHS of eq. 
\eqref{idealFest2} with its estimate. The estimate $\Delta\hat\cF_k$ will converge to the true value 
$\Delta\cF_k$ in finite time, if it was updated according to the (ideal) observer law:
\be \Delta\hat\cF_k= \cD (e^\Delta_{k-1}) e^\Delta_{k-1} + \Delta\cF_k. \label{idealDFest} \ee
Note that this ideal observer for $\Delta\cF_k$ is of the same form as the ideal FTS observer law for 
$\cF_k$ given by eq. \eqref{idealFest}. And like the ideal observer \eqref{idealFest}, the observer eq. 
\eqref{idealDFest} is not practically implementable because $\Delta\cF_k$ is unknown at time $t_k$ 
(because $\cF_k$ is unknown). As we did with the first order observer in Proposition \ref{prop1st}, we 
replace $\Delta\cF_k$ in \eqref{idealDFest} with its previous value, assuming that this quantity changed 
little in the time interval $[t_{k-1},t_k]$. This leads to the following observer law for $\Delta\cF_k$: 
\be \Delta\hat\cF_k= \cD (e^\Delta_{k-1}) e^\Delta_{k-1} + \Delta\cF_{k-1}. \label{DFest} \ee
The resulting second order observer is therefore given by eqs. \eqref{Fest2nd}. To show that this is 
indeed second order, the evolution of the estimation error $e^\cF_k$ in discrete time is obtained as 
below:
\begin{align}
\begin{split} 
 e^\cF_{k+1} &:= \hat\cF_{k+1}- \cF_{k+1} = \cD (e^\cF_k) e^\cF_k  +\cD (e^\Delta_{k-1}) e^\Delta_{k-1} \\ 
 &\; \; + \cF_k + \Delta\cF_{k-1} - \cF_{k+1} \\
 &= \cD (e^\cF_k) e^\cF_k +\cD (e^\Delta_{k-1}) e^\Delta_{k-1} - \Delta^2\cF_k,
\end{split}\label{err2nd}
\end{align}
where $\Delta^2\cF_k$ is as defined by eq. \eqref{Del2F}. The last line in the above expression is 
obtained by substituting for $\Delta\cF_{k-1}$ in the previous line, using the definition of $\Delta\cF_k$ 
given by eq. \eqref{DelF}. The remainder of the proof of this result uses the same arguments as in the 
last part of the proof of Proposition \ref{prop1st}, with $\Delta\cF_k$ replaced by $\Delta^2\cF_k$. 
\end{proof}

\begin{remark}
It is clear from the constructive proofs of Propositions \ref{prop1st} and \ref{prop2nd} that higher order 
observers for $\cF_k$ may be constructed using a similar process as outlined in these proofs. For 
example, a third order observer can be constructed by replacing $\Delta\cF_{k-1}$ in the second line 
of eq. \eqref{Fest2nd} with $\Delta\cF_{k-1}+ \Delta^2\hat\cF_{k-1}$ and finding an appropriate update 
law for $\Delta^2\hat\cF_{k-1}$. Clearly, the added computational burden of higher order observers 
make them unattractive for implementation when the higher order differences of the discrete signal 
$\cF_k$ are known to be within reasonable bounds. In most situations when bounds (perhaps 
conservative) on $\Delta\cF_k$ and $\Delta^2\cF_k$ are known, these low order observers are 
adequate.
\end{remark}

\section{Model-free nonlinearly stable feedback tracking control}\label{sec: mfccon} 
This section develops a nonlinear model-free output feedback tracking control scheme that solves 
Problem \ref{prob1} in Section \ref{ssec: ulmd}. The control design process is based on Assumptions 
\ref{assump1} and \ref{assump2} for the nonlinear system \eqref{dynoutmod} (or \eqref{discmod} in 
discrete time), and is designed to track a desired output trajectory for a system expressed by the 
ultra-local model \eqref{ultramodel}. The control design 
given here may make use of the finite-time stable output observer developed in Section \ref{ssec: ftsobss} 
to filter out noise in output measurements, as well as the nonlinear observers for the ultra-local model 
given in Sections \ref{ssec: estF1st} and \ref{ssec: estF2nd}. But the control design is independent of 
these observers designed in the earlier sections, and can be used in conjunction with other output 
and ultra-local model observers that do the same tasks. This control scheme, however, does need 
accurate estimates of the ultra-local model and the measured outputs for output tracking.

\subsection{Output Trajectory Tracking Control}\label{ssec: trajcon}
Our framework for nonlinear model-free control 
designs a control law for the control input $u_k$ at time $t_k$ from the output estimate $\hat y_k$, the 
desired output $y^d_k$, and the estimate of the ultra-local model $\hat\cF_k$ constructed from output 
measurements with additive sensor noise and past input-output history as described in Sections 
\ref{sec: ftsobs} and \ref{sec: ulm}. 
Considering Problem \ref{prob1}, define the output trajectory tracking error 
\be e_k= y_k - y^d_k\, \mbox{ where }\, y^d_k= y^d (t_k). \label{trackerrs} \ee
In practice, the true output $y_k$ is substituted by its estimate $\hat y_k$ for feedback tracking control 
of $y^d_k$. The objectives of the control design are: (1) to ensure that the feedback system tracks the 
desired trajectory in a nonlinearly stable manner; and (2) in the absence of measurement noise and if 
$\hat\cF_k =\cF_k$, $e_k$ converges to the zero vector asymptotically or in finite time starting from a 
finite non-zero value. 

For an unknown system whose input-output behavior is modeled by the ultra-local model \eqref{ultramodel}, 
we define the variable:
\begin{align} 
\begin{split}
&s_k = e_k^{(\nu -1)} + c_1 e_k^{(\nu -2)}+ \cdots + c_{\nu-1} e_k,\\ 
&\mbox{where } 1>c_1>\ldots>c_{\nu-1}> 0. 
\end{split} \label{skdefn} 
\end{align}
This variable plays a role similar to a sliding mode in sliding mode control. As in \eqref{ultramodel}, 
$(\cdot)^{(\mu)}$ represents the discrete-time analog of the $\mu$th time derivative, so that $s_k=0$ is  
a $(\nu-1)$ order finite difference equation. Note that the condition on the $c_i$ for $i=1,\ldots,\nu-1$ 
ensures that $z^{\nu-1}+ c_1 z^{\nu-2}+ \cdots+ c_{\nu-1}$ is a Schur (stable) polynomial, and therefore 
the manifold $s_k=0$ has $e_k=0$ as the globally exponentially stable equilibrium. Thereafter, the control 
design process ensures that the feedback system converges in a finite-time stable manner to the manifold 
$s_k=0$. 

\begin{remark}\label{skrem}
Although the process outlined in the previous paragraph is similar to that followed in sliding mode 
control approaches, there is a key difference. The approach followed here, as with our observer design, 
is to obtain finite-time stable convergence to a desired equilibrium or manifold in a manner that is 
continuous (in this case H\"{o}lder continuous). This approach avoids the disadvantages of discontinuous 
feedback control like chattering, non-standard notions of solutions, and implementation issues with 
actuators that can only provide continuous control inputs. 
\end{remark}  

With the variable $s_k$ as defined by \eqref{skdefn}, the control law design proceeds by defining the 
Lyapunov function 
\be V^c_k = \frac12 (s_k)\T K s_k, \label{disconLyapf} \ee
where $K=K\T\in\bR^{n\times n}$ is a positive definite matrix, which makes $V^c_k$ a positive definite 
function of $s_k$. The total time difference of this discrete Lyapunov function in the time interval 
$[t_k, t_{k+1}]$ is then obtained as
\begin{align} 
V_{k+1}^c - V_k^c &= \frac12 s_{k+1}\T K s_{k+1} - \frac12 s_k\T K s_k \nn \\
&= \frac12 \big(s_{k+1}- s_k\big)\T K \big(s_{k+1}+ s_k\big). \label{Vckdiff} 
\end{align}
A sufficient condition for $s_k$ to converge to zero in an asymptotically stable manner is to ensure that
\be  
s_{k+1}= s_k - \eta (s_k + s_{k+1}) \mbox{ or } s_{k+1}= \frac{1-\eta}{1+\eta} s_k,
\label{asympt-con}
\ee 
where $\eta>0$ is a constant positive control gain. Due to the definition of $s_k$ given by 
\eqref{skdefn}, this in turn ensures that the feedback system is exponentially convergent in discrete-time, 
i.e., $e_k$ converges to zero exponentially so that the desired output trajectory is tracked exponentially. 
The following statement gives a finite-time stable control law in discrete time. 

\begin{lemma}\label{dFTS-sk}
Let $s_k$ be as defined in \eqref{skdefn} and let $\eta>0$ and $q\in]1,2[$. Let the discrete-time 
evolution of $s_k$ be given by:
\begin{align} 
s_{k+1}= \mathcal{C}(s_k) s_k, \, \mbox{ where }\,  \cC (s_k)= \frac{\big((s_k)\T s_k
\big)^{1-1/q} - \eta}{\big((s_k)\T s_k\big)^{1-1/q} + \eta}. \label{dFTS-cont} 
\end{align}
The discrete-time evolution \eqref{dFTS-cont} leads to (Lyapunov) stable convergence of $s_k$ to 
zero for $k\ge N$ and finite $N\in \bW$.
\end{lemma}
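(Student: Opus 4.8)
The plan is to reproduce, essentially verbatim, the Lyapunov argument used for the finite-time stable output observer in Theorem \ref{dFTSobs}, since the prescribed evolution \eqref{dFTS-cont} has exactly the structure of \eqref{eqvobslaw} under the substitutions $e^o_k\mapsto s_k$, $\beta\mapsto\eta$, $p\mapsto q$, with the weighting $L$ inside $\cB(\cdot)$ replaced by the identity inside $\mathcal{C}(\cdot)$. First I would rewrite \eqref{dFTS-cont} in ``implicit'' form: using
\[
\mathcal{C}(s_k)-1 = \frac{-2\eta}{\big((s_k)\T s_k\big)^{1-1/q}+\eta},\qquad
\mathcal{C}(s_k)+1 = \frac{2\big((s_k)\T s_k\big)^{1-1/q}}{\big((s_k)\T s_k\big)^{1-1/q}+\eta},
\]
one gets $\big((s_k)\T s_k\big)^{1-1/q}(s_{k+1}-s_k) = -\eta\,(s_{k+1}+s_k)$, the analogue of \eqref{eokdiff}.

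Because the matrix $K$ in the stated Lyapunov function \eqref{disconLyapf} does not appear inside $\mathcal{C}(\cdot)$, I would carry out the finite-time analysis with the Euclidean form $\tilde V_k:=\tfrac12 (s_k)\T s_k$ and then transfer the conclusions to $V^c_k$ through the equivalence $\lambda_{\min}(K)\,(s_k)\T s_k \le (s_k)\T K s_k \le \lambda_{\max}(K)\,(s_k)\T s_k$. Applying \eqref{Vckdiff} with $K$ set to the identity and substituting the implicit form above yields
\[
\tilde V_{k+1}-\tilde V_k = -\frac{\eta}{2}\,\frac{(s_{k+1}+s_k)\T(s_{k+1}+s_k)}{\big((s_k)\T s_k\big)^{1-1/q}},
\]
the analogue of \eqref{diffVko}. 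As in the observer proof, the right-hand side vanishes iff $s_{k+1}=-s_k$, i.e.\ iff $\mathcal{C}(s_k)=-1$, and the explicit formula for $\mathcal{C}(s_k)$ forces this to occur iff $s_k=0$; hence $\tilde V_{k+1}-\tilde V_k$ is negative definite, so \eqref{dFTS-cont} is (Lyapunov) stable, and by the eigenvalue bounds the same holds for $V^c_k$.

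Next, writing $s_{k+1}+s_k=\big(1+\mathcal{C}(s_k)\big)s_k$ and $(s_k)\T s_k=2\tilde V_k$, the displayed identity collapses to $\tilde V_{k+1}-\tilde V_k = -\gamma^c_k(\tilde V_k)^{1/q}$ with $\gamma^c_k = \frac{\eta}{2^{1-1/q}}\big(1+\mathcal{C}(s_k)\big)^2$, the analogue of \eqref{dVko2}; substituting $(s_k)\T s_k=2\tilde V_k$ into $\mathcal{C}(s_k)$ then expresses $\gamma^c_k$ as an explicit function of $\tilde V_k$ of exactly the form \eqref{gamk} (with $\eta,q$ in place of $\beta,p$). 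As established there, this $\gamma^c_k$ is a class-$\cK$ function of $\tilde V_k$, bounded above by $4\eta/2^{1-1/q}$ and hence not class-$\cK_\infty$, so Corollary \ref{dFTScor} applies and gives $\tilde V_k=0$ — equivalently $s_k=0$ and $V^c_k=0$ — for all $k\ge N$ with $N\in\bW$ finite. Note that $1/q\in\,]0,1[$ since $q\in\,]1,2[$, so the exponent meets the hypothesis of Lemma \ref{discFTSlem}, and the conclusion $s_k\to 0$ in finite time, hence (via \eqref{skdefn} being a Schur difference equation) exponential decay of $e_k$ on the manifold $s_k=0$, is the claim.

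The only genuine wrinkle, and the step I would be most careful with, is precisely this mismatch between the weighting $K$ in \eqref{disconLyapf} and the plain inner product appearing in $\mathcal{C}(s_k)$: one must either run the Lyapunov computation with $\tilde V_k$ as above and appeal to equivalence of the two quadratic forms, or simply take $K=I$ in \eqref{dFTS-cont}. Everything else — checking negative definiteness of the one-step difference, verifying that $\gamma^c_k$ is class-$\cK$ but not class-$\cK_\infty$, and invoking Corollary \ref{dFTScor} — is a direct transcription of the proof of Theorem \ref{dFTSobs}.
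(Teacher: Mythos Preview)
Your proposal is correct and follows exactly the paper's own approach: the paper's proof simply substitutes \eqref{dFTS-cont} into \eqref{Vckdiff} to obtain the analogue of \eqref{diffVko} and then states that the remainder is identical to the proof of Theorem~\ref{dFTSobs} with $e^o_k\mapsto s_k$, $\beta\mapsto\eta$, $p\mapsto q$, and $L\mapsto I$. You are in fact slightly more careful than the paper about the $K$-versus-identity mismatch, which the paper resolves implicitly by taking $K=I$ (equivalently, working with your $\tilde V_k$) without comment.
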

\begin{proof}
Consider the difference of the Lyapunov function given by \eqref{Vckdiff}. Substituting eq. \eqref{dFTS-cont} 
into this expression, we get:
\be V_{k+1}^c - V_k^c = -\frac{\eta}{2} \frac{(s_{k+1}+ s_k)\T (s_{k+1}+ s_k)}{\big((s_k)\T s_k\big)^{1-1/p}}. \label{diffVkc} \ee 
The remainder of this proof is identical to the proof of Theorem \ref{dFTSobs}, with $e^o_k$ replaced 
by $s_k$; $\beta$, $p$, and $L$ replaced by $\eta$, $q$, and $I$ (the identity matrix) respectively; and 
corresponding changes. 
\end{proof}

The following statement gives a control law that 
ensures that $s_k$ converges to zero in finite time, and therefore the output tracking error $e_k$ converges 
to zero exponentially, if $\hat\cF_k$ converges to $\cF_k$ in finite time. 

\begin{theorem}\label{dFTScont}
Consider $s_k$ as defined by eq. \eqref{skdefn} and define $e_k^{(\nu)}$ recursively from $e_k^{(\nu-1)}$ 
as in eq. \eqref{fordif}. 
Thereafter, consider the control law:
\begin{align}
\begin{split} 
\cG_k  u_k =& \big(y^d_k\big)^{(\nu)} - \frac{2\eta}{\big((s_k)\T s_k\big)^{1-1/q} + \eta}s_k- \hat\cF_k \\
&- c_1 e_k^{(\nu-1)} -\ldots -c_{\nu -1} e_k^{(1)},
\end{split} \label{MFClaw} 
\end{align} 
where $\eta$ and $q$ are as defined in Lemma \ref{dFTS-sk}. 
Then the unknown system with the ultra-local model \eqref{ultramodel} and the control law \eqref{MFClaw} 
tracks the desired output trajectory $y^d_k$ in an exponentially stable manner if $\hat\cF_k$ converges to 
$\cF_k$ in finite time. 
\end{theorem}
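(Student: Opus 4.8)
The plan is to verify that, under the control law \eqref{MFClaw}, the sliding variable $s_k$ of \eqref{skdefn} obeys exactly the finite-time-stable recursion \eqref{dFTS-cont} of Lemma \ref{dFTS-sk} once the model estimate has converged, and then to combine Lemma \ref{dFTS-sk} with the Schur property of $z^{\nu-1}+c_1 z^{\nu-2}+\cdots+c_{\nu-1}$ already noted after \eqref{skdefn}. First I would compute the one-step difference $s_{k+1}-s_k$. Applying the forward-difference convention \eqref{fordif} to $e_k$ in place of $y_k$ gives $e_{k+1}^{(\mu)}=e_k^{(\mu)}+e_k^{(\mu+1)}$ for all $\mu\ge 0$; writing \eqref{skdefn} at index $k+1$, inserting this identity termwise, and subtracting \eqref{skdefn} at index $k$, a short computation reduces this to
\[ s_{k+1}-s_k = e_k^{(\nu)} + c_1 e_k^{(\nu-1)} + \cdots + c_{\nu-1} e_k^{(1)} . \]

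Next I would bring in the dynamics. By linearity of the finite-difference operator, $e_k^{(\nu)}=y_k^{(\nu)}-(y^d_k)^{(\nu)}$, while the ultra-local model \eqref{ultramodel} gives $y_k^{(\nu)}=\cF_k+\cG_k u_k$ (a control $u_k$ realizing the prescribed value of $\cG_k u_k$ exists because $\cG_k$ is full rank with $m\ge l$, so one may use a right inverse). Substituting \eqref{MFClaw} for $\cG_k u_k$ in the previous display, the terms $(y^d_k)^{(\nu)}$ and $-c_1 e_k^{(\nu-1)}-\cdots-c_{\nu-1}e_k^{(1)}$ cancel against their counterparts, leaving
\[ s_{k+1}-s_k = -e^\cF_k - \frac{2\eta}{\big((s_k)\T s_k\big)^{1-1/q}+\eta}\, s_k , \]
where $e^\cF_k=\hat\cF_k-\cF_k$ is the ultra-local-model estimation error \eqref{esterrF}. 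By hypothesis there is a finite $N_\cF\in\bW$ with $e^\cF_k=0$ for $k\ge N_\cF$; for $k<N_\cF$ this display is the recursion \eqref{dFTS-cont} perturbed by the bounded term $e^\cF_k$, so $s_k$ stays bounded (the bounded-perturbation argument from the proof of Proposition \ref{prop1st}). For $k\ge N_\cF$ the perturbation vanishes and the display collapses to $s_{k+1}=\big(1-\tfrac{2\eta}{((s_k)\T s_k)^{1-1/q}+\eta}\big)s_k=\cC(s_k)s_k$, i.e.\ exactly \eqref{dFTS-cont}; Lemma \ref{dFTS-sk} then provides a finite $N\ge N_\cF$ with $s_k=0$ for all $k\ge N$.

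Finally, for $k\ge N$ the relation $s_k=0$ is the homogeneous finite-difference equation $e_k^{(\nu-1)}+c_1 e_k^{(\nu-2)}+\cdots+c_{\nu-1}e_k=0$, whose associated polynomial is Schur under $1>c_1>\cdots>c_{\nu-1}>0$, so $e_k\to 0$ exponentially; together with the finite transient $k\le N$ this gives exponentially stable tracking of $y^d_k$, as claimed. I do not anticipate a real obstacle: the substance is the telescoping computation of the first step, where the index and sign bookkeeping under \eqref{fordif} must be done carefully, and the remark that the ``$\hat\cF_k$ converges in finite time'' hypothesis is needed only to pass from the bounded-perturbation regime to the clean recursion \eqref{dFTS-cont}; the exponential-decay conclusion is then inherited verbatim from the Schur property stated after \eqref{skdefn}.
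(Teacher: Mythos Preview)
Your proposal is correct and follows essentially the same approach as the paper: both compute $s_{k+1}-s_k=e_k^{(\nu)}+c_1 e_k^{(\nu-1)}+\cdots+c_{\nu-1}e_k^{(1)}$ from \eqref{skdefn} and \eqref{fordif}, substitute the ultra-local model \eqref{ultramodel} and the control law \eqref{MFClaw}, reduce to the recursion \eqref{dFTS-cont} once $\hat\cF_k=\cF_k$, invoke Lemma \ref{dFTS-sk}, and finish with the Schur property of the polynomial in \eqref{skdefn}. The only cosmetic difference is that the paper runs the derivation in the reverse direction (starting from the desired $s_k$-recursion and arriving at \eqref{MFClaw}), and defers the explicit perturbation identity $s_{k+1}-s_k=-e^\cF_k-\tfrac{2\eta}{((s_k)\T s_k)^{1-1/q}+\eta}s_k$ to Corollary \ref{robcor}, whereas you write it out here.
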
  
\begin{proof}
To start with, we re-express eq. \eqref{dFTS-cont} as follows:
\be s_{k+1}-s_k = -\frac{2\eta}{\big((s_k)\T s_k\big)^{1-1/q} + \eta} s_k. \label{reskeq} \ee
Substituting eq. \eqref{skdefn} and using the recursive definition of $e_k^{(\nu)}$, we see that  
\be e^{(\nu)}_k + c_1 e^{(\nu-1)}_k+ \ldots+ c_{\nu-1} e_k^{(1)}= -\frac{2\eta}{\big((s_k)\T s_k\big)^{1-1/q} + 
\eta} s_k. 
\label{mfcslidm} \ee
Now substituting the ultra-local model \eqref{ultramodel} and eq. \eqref{trackerrs}, we get:
\begin{align}
\begin{split} 
&\cF_k+ \cG_k u_k  -(y^d_k)^{(\nu)} + c_1 e^{(\nu-1)}_k+ \ldots+ c_{\nu-1} e_k^{(1)} \\ 
&= -\frac{2\eta}{\big((s_k)\T s_k\big)^{1-1/q} + \eta} s_k. 
\end{split} \label{nmfcsk} 
\end{align}
Replacing $\cF_k$ with $\hat\cF_k$ in eq. \eqref{nmfcsk} and re-arranging terms, we obtain the control law
\eqref{MFClaw} for the system. Therefore, if $\hat\cF_k$ converges to $\cF_k$ in finite time (i.e., for finite 
$k$), then $s_k$ converges to zero in finite time and thereafter $e_k$ converges to zero exponentially. 
\end{proof}

\subsection{Robustness of Model-Free Output Tracking Control Scheme}
The convergence of the output tracking error $e_k$ to zero is contingent upon $\hat\cF_k$ converging to 
$\cF_k$ in finite time, according to Theorem \ref{dFTScont}. This remains true even if $e_k$ is replaced 
by $\hat e_k:= \hat y_k- y^d_k$ in the definition \eqref{skdefn} where $\hat y_k$ is obtained by the 
finite-time stable output observer of Theorem \ref{dFTSobs}. This is because if $\eta\in ]0,\beta[$ and $q
\in]1,p[$, the output estimation error $e^o_k$ given by the output observer eq. \eqref{dFTS-obs} converges 
to zero in a shorter time horizon than $s_k$ converges to zero according to eq. \eqref{dFTS-sk}. 
However, in practice, observers for the ultra-local model $\cF_k$ like those in Propositions \ref{prop1st} and 
\ref{prop2nd} can at best ensure stable convergence of $\hat\cF_k$ to a neighborhood of $\cF_k$ in finite 
time, with the size of this neighborhood depending on the size of the first-order difference $\Delta\cF_k$ or 
second-order difference $\Delta^2\cF_k$, respectively. Therefore, when the controller of Theorem 
\ref{dFTScont} is used with $\hat\cF_k$ given by the ultra-local model observers of Proposition \ref{prop1st} 
or \ref{prop2nd}, the overall output tracking scheme will be Lyapunov stable, but not exponentially 
(or asymptotically) stable. This is shown in the following corollary to Theorem \ref{dFTScont}. 

\begin{corollary}\label{robcor}
The feedback tracking control law given by eq. \eqref{MFClaw} used in conjunction with either of the 
ultra-local model observers given by eqs. \eqref{Fest1st} or \eqref{Fest2nd}, lead to the feedback 
system being (Lyapunov) stable and robust to errors in the ultra-local model estimate, 
$e^\cF_k$. 
\end{corollary}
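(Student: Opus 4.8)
The plan is to reuse the discrete Lyapunov computation behind Lemma~\ref{dFTS-sk} and Theorem~\ref{dFTScont}, but now to carry the model-estimation error $e^\cF_k$ as an additive perturbation instead of assuming it is zero. First I would substitute the control law \eqref{MFClaw}, in which $\cF_k$ has been replaced by $\hat\cF_k$, into the true ultra-local model \eqref{ultramodel} and into the definition \eqref{skdefn} of $s_k$. Retracing \eqref{mfcslidm}--\eqref{nmfcsk} with $\hat\cF_k=\cF_k+e^\cF_k$ in place of $\cF_k$, and using $s_{k+1}-s_k=e_k^{(\nu)}+c_1 e_k^{(\nu-1)}+\cdots+c_{\nu-1}e_k^{(1)}$ from \eqref{fordif}, the closed-loop evolution of the sliding variable becomes
\[ s_{k+1}-s_k = -\rho_k\, s_k - e^\cF_k, \qquad \rho_k := \frac{2\eta}{\big((s_k)\T s_k\big)^{1-1/q}+\eta}\in\, ]0,2[ , \]
i.e.\ exactly the nominal recursion \eqref{reskeq} of Lemma~\ref{dFTS-sk}, perturbed additively by $-e^\cF_k$ in the same way that the implementable observer \eqref{err1st} is a perturbation of the ideal law \eqref{FestFTS}.

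Next I would evaluate the Lyapunov difference for $V^c_k=\tfrac12 (s_k)\T K s_k$ using \eqref{Vckdiff}. Writing $s_{k+1}-s_k=-\rho_k s_k-e^\cF_k$ and $s_{k+1}+s_k=(2-\rho_k)s_k-e^\cF_k$ and expanding the bilinear form gives
\[ V^c_{k+1}-V^c_k = -\tfrac12\,\rho_k(2-\rho_k)\,(s_k)\T K s_k + (\rho_k-1)\,(e^\cF_k)\T K s_k + \tfrac12\,(e^\cF_k)\T K e^\cF_k . \]
Since $\rho_k\in\, ]0,2[$ one has $\rho_k(2-\rho_k) = 4\eta\,\big((s_k)\T s_k\big)^{1-1/q}/\big(\big((s_k)\T s_k\big)^{1-1/q}+\eta\big)^2 > 0$, which for large $\|s_k\|$ behaves like $4\eta\|s_k\|^{-2(1-1/q)}$; hence the first (negative) term grows like $\|s_k\|^{2/q}$ with exponent $2/q>1$, while $|\rho_k-1|<1$ bounds the cross term by $\lambda_{\max}(K)\,\|e^\cF_k\|\,\|s_k\|$ and the last term by $\tfrac12\lambda_{\max}(K)\|e^\cF_k\|^2$. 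Because $2/q>1$, the negative term dominates as soon as $\|s_k\|$ exceeds a threshold that is a monotone function of $\sup_j\|e^\cF_j\|$, so $V^c_{k+1}-V^c_k<0$ outside a ball about $s_k=0$ whose radius is controlled by the size of $e^\cF_k$; this is uniform ultimate boundedness of $s_k$. Through the Schur property of $z^{\nu-1}+c_1 z^{\nu-2}+\cdots+c_{\nu-1}$ this transfers to Lyapunov stability of the tracking error $e_k$, and when $e^\cF_k\equiv 0$ the ball collapses and Theorem~\ref{dFTScont} recovers exponential tracking.

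Finally I would invoke Propositions~\ref{prop1st} and \ref{prop2nd}: with either the first- or the second-order ULM observer, $e^\cF_k$ enters a bounded neighborhood of zero in finite time, with radius controlled by the bound on $\Delta\cF_k$ (respectively $\Delta^2\cF_k$), and this occurs independently of the controller --- the separation-principle feature emphasized in Sections~\ref{sec: ftsobs} and~\ref{sec: ulm}. After that finite time the perturbation estimate of the previous paragraph applies with $\sup_j\|e^\cF_j\|$ finite, giving the claimed robust Lyapunov stability. The step I expect to be the main obstacle is the cross-term bookkeeping in the second display: the gain $\rho_k$ is state dependent and the exponent $1-1/q$ (arising from $q\in\, ]1,2[$) is non-Lipschitz near the origin, so one must verify carefully that the genuinely negative contribution $-\tfrac12\rho_k(2-\rho_k)(s_k)\T K s_k$ outgrows the $\|e^\cF_k\|\,\|s_k\|$ term --- which it does, precisely because its growth exponent $2/q$ exceeds $1$ --- and that the $\|e^\cF_k\|^2$ remainder does not spoil sign-definiteness outside the ultimate-bound ball; a conservative completion-of-squares estimate, together with $\lambda_{\min}(K)\|s_k\|^2\le (s_k)\T K s_k\le\lambda_{\max}(K)\|s_k\|^2$, suffices for the qualitative claim.
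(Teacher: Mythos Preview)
Your proposal is correct and follows essentially the same route as the paper: both substitute \eqref{MFClaw} into \eqref{ultramodel} to obtain the perturbed sliding recursion $s_{k+1}-s_k=-\rho_k s_k - e^\cF_k$ (the paper's \eqref{errFB}), recognize it as the ideal FTS law \eqref{reskeq} with additive disturbance $e^\cF_k$, and then invoke Propositions~\ref{prop1st}--\ref{prop2nd} for the ultimate boundedness of $e^\cF_k$. The only difference is one of detail: the paper stops at ``therefore $s_k$ converges to a neighborhood of zero, where the size of this neighborhood depends on the size of $e^\cF_k$,'' whereas you actually carry out the Lyapunov difference \eqref{Vckdiff} explicitly and verify that the dominant negative term scales like $\|s_k\|^{2/q}$ with $2/q>1$, so it outgrows the $O(\|s_k\|)$ cross term --- a quantitative justification the paper leaves implicit.
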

\begin{proof}
This is shown by substituting the feedback control law \eqref{MFClaw} into the ultra-local model for the 
input-output dynamics given by eq. \eqref{ultramodel}. That leads to the expression:
\begin{align}
y_k^{(\nu)}=& \cF_k + (y^d_k)^{(\nu)} -c_1 e_k^{(\nu -1)} -\ldots -c_{\nu-1} e_k^{(1)}- \hat\cF_k \nn \\
& -\frac{2\eta}{\big((s_k)\T s_k\big)^{1-1/q} + \eta} s_k. \label{yknu} 
\end{align}
Re-arranging eq. \eqref{yknu} to express in terms of $e^\cF_k$, $e_k$ and its finite differences, we get:
\begin{align}
& e^\cF_k + e_k^{(\nu)}+ c_1 e_k^{(\nu -1)} +\ldots +c_{\nu-1} e_k^{(1)} \nn \\
& +\frac{2\eta}{\big((s_k)\T s_k\big)^{1-1/q} + \eta} s_k =0 \nn \\
& \Rightarrow e^\cF_k + s_k^{(1)} +\frac{2\eta}{\big((s_k)\T s_k\big)^{1-1/q} + \eta} s_k =0. \label{errFB}
\end{align}
Noting that $s_k^{(1)}= s_{k+1}-s_k$ according to the finite difference defined by eq. \eqref{fordif}, we see 
that the expression \eqref{errFB} is a perturbation of the ideal finite-time stable behavior of $s_k$ as 
given by eq. \eqref{reskeq}, where the perturbing signal is $e^\cF_k$. Therefore $s_k$ converges to a 
neighborhood of zero, where the size of this neighborhood depends on the size of $e^\cF_k$. Now 
invoking Proposition \ref{prop1st} or Proposition \ref{prop2nd}, we see that $e^\cF_k$ remains ultimately 
bounded if the first or second order differences $\Delta\cF_k$ or $\Delta^2\cF_k$ are bounded, respectively. 
As these finite differences will be bounded according to Assumption \ref{assump2} and the ultra-local 
model \eqref{ultramodel}, this concludes the proof. 
\end{proof}

\begin{remark}
Note that Theorem \ref{dFTScont} or Corollary \ref{robcor} do not specify how to select control gains with 
respect to the previous results on output and ultra-local model observers given in sections \ref{sec: ftsobs} 
and \ref{sec: ulm}. To ensure stability of the overall loop, it is necessary to ensure that the output observer 
converges the fastest, so that $\hat y_k$ converges to $y_k$ faster than $\hat\cF_k$ converges to (a 
neighborhood of) $\cF_k$ or $s_k$ converges to zero. Further, it is useful to ensure that the function 
$\cC(s_k)$ in the control design gives slower convergence of $s_k$ towards zero than the function 
$\cD(e^\cF_k)$ in the ultra-local model observer designs of Section \ref{sec: ulm}. This will lead to 
$\hat\cF_k$ converging to a desired neighborhood of $\cF_k$ faster than $s_k$ approaches zero (e.g., 
when conservative bounds on $\Delta\cF_k$ and $\Delta^2\cF_k$ are known, as mentioned in Remark 3).
\end{remark}

\subsection{Output Trajectory Tracking for Second Order System}
The final result given here is a model-free control law for a general second-order input-output system. 
Assuming that $\nu=2$ in the ultra-local model \eqref{ultramodel}, we define $s_k$ as follows:
\be s_k= e_{k+1}- e_k +\mu e_k\, \mbox{ where }\, 0<\mu<1. \label{sk2nd} \ee
The discrete-time ultra-local model for a second order system ($\nu=2$) is obtained from  
eq. \ref{ultramodel} as follows:
\be y_k^{(2)}=  y_{k+2}- 2y_{k+1} + y_k= \cF_k + \cG_k u_k. 
\label{y2dot} \ee 
With this ultra-local model and $s_k$ as defined by \eqref{sk2nd}, we have the following result.
\begin{corollary}\label{cont2nd}
Consider the second-order discrete-time system given by \eqref{y2dot} with $s_k$ as defined by 
\eqref{sk2nd} and with $e_k$ defined by \eqref{trackerrs}. 
Then this system with the control law:
\begin{align} 
\cG_k u_k =& y^d_{k+2}-2 y^d_{k+1}+ y^d_k - \frac{2\eta}{\big((s_k)\T s_k\big)^{1-1/q} + \eta}e_k^{(1)} \nn \\
&+ \cC (s_k) \mu e_k -\mu e_{k+1} -\hat\cF_k, \label{MFClaw2nd} 
\end{align}
tracks the desired output trajectory $y^d_k$ in an exponentially stable manner if $\hat\cF_k$ converges 
to $\cF_k$ in finite time. 
\end{corollary}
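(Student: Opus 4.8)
The plan is to treat Corollary~\ref{cont2nd} as the $\nu=2$ instance of Theorem~\ref{dFTScont} with $c_1=\mu$, and to verify that the stated law \eqref{MFClaw2nd} is precisely what the generic law \eqref{MFClaw} becomes once the forward differences of \eqref{fordif} are expanded. First I would start from the target closed-loop behavior of the sliding-type variable, namely $s_{k+1}=\cC(s_k)s_k$ from Lemma~\ref{dFTS-sk}, rewritten as in \eqref{reskeq},
\[ s_{k+1}-s_k = -\frac{2\eta}{\big((s_k)\T s_k\big)^{1-1/q}+\eta}\,s_k . \]
With $\nu=2$ and $s_k=e_k^{(1)}+\mu e_k=e_{k+1}-e_k+\mu e_k$, the left-hand side expands, using \eqref{fordif}, to $\big(e_{k+1}^{(1)}-e_k^{(1)}\big)+\mu\big(e_{k+1}-e_k\big)=e_k^{(2)}+\mu e_k^{(1)}=(e_{k+2}-2e_{k+1}+e_k)+\mu e_k^{(1)}$; moreover $s_k=0$ is the Schur-stable recursion $e_{k+1}=(1-\mu)e_k$ with $0<1-\mu<1$.

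Next I would substitute the second-order ultra-local model \eqref{y2dot} and the tracking-error definition \eqref{trackerrs}: since $e_k^{(2)}=y_k^{(2)}-(y^d_k)^{(2)}=\cF_k+\cG_k u_k-\big(y^d_{k+2}-2y^d_{k+1}+y^d_k\big)$, the identity above reads
\[ \cF_k+\cG_k u_k-\big(y^d_{k+2}-2y^d_{k+1}+y^d_k\big)+\mu e_k^{(1)} = -\frac{2\eta}{\big((s_k)\T s_k\big)^{1-1/q}+\eta}\,s_k . \]
Solving for $\cG_k u_k$ and replacing the unknown $\cF_k$ by the estimate $\hat\cF_k$ gives
\[ \cG_k u_k = y^d_{k+2}-2y^d_{k+1}+y^d_k-\mu e_k^{(1)}-\hat\cF_k-\frac{2\eta}{\big((s_k)\T s_k\big)^{1-1/q}+\eta}\,s_k . \]
Writing $D=\big((s_k)\T s_k\big)^{1-1/q}+\eta$ and using $s_k=e_k^{(1)}+\mu e_k$, $e_k^{(1)}=e_{k+1}-e_k$, together with the elementary identity $1-\tfrac{2\eta}{D}=\cC(s_k)$, one checks that $-\mu e_k^{(1)}-\tfrac{2\eta}{D}s_k=-\tfrac{2\eta}{D}e_k^{(1)}+\cC(s_k)\mu e_k-\mu e_{k+1}$, which turns the displayed control law into exactly \eqref{MFClaw2nd}. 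Hence, under \eqref{MFClaw2nd}, the closed loop satisfies $s_{k+1}=\cC(s_k)s_k$ when $\hat\cF_k=\cF_k$.

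Finally I would invoke Lemma~\ref{dFTS-sk} to conclude $s_k=0$ for all $k\ge N$ with finite $N\in\bW$, once $\hat\cF_k$ has converged to $\cF_k$ in finite time; thereafter $e_{k+1}=(1-\mu)e_k$ forces $e_k\to0$ exponentially, i.e.\ $y^d_k$ is tracked exponentially. If instead $\hat\cF_k$ only reaches a neighborhood of $\cF_k$, as for the observers of Propositions~\ref{prop1st}--\ref{prop2nd}, the $s_k$-recursion becomes a perturbation of $\cC(s_k)s_k$ and one obtains Lyapunov stability with ultimate boundedness, exactly the situation of Corollary~\ref{robcor}. There is no deep obstacle here: the work is the forward-difference bookkeeping and the one algebraic identity reconciling \eqref{MFClaw2nd} with the generic law \eqref{MFClaw}; the only points requiring care are (i) confirming that every term appearing in \eqref{MFClaw2nd} is among those already used in \eqref{MFClaw}, hence admissible at the instant $u_k$ is formed, and (ii) stating, as in Theorem~\ref{dFTScont}, that the exponential-tracking claim is conditional on finite-time convergence of $\hat\cF_k$.
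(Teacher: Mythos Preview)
Your proposal is correct and follows essentially the same approach as the paper: both arguments reduce Corollary~\ref{cont2nd} to Theorem~\ref{dFTScont} by showing that \eqref{MFClaw2nd} is exactly the $\nu=2$, $c_1=\mu$ specialization of \eqref{MFClaw}, via the same algebraic identity that splits $-\mu e_k^{(1)}-\tfrac{2\eta}{D}s_k$ into the three terms $-\tfrac{2\eta}{D}e_k^{(1)}+\cC(s_k)\mu e_k-\mu e_{k+1}$. Your derivation is slightly more self-contained in that you redevelop the control law from the target relation $s_{k+1}=\cC(s_k)s_k$ rather than immediately specializing \eqref{MFClaw}, but the intermediate expression you reach coincides with the paper's \eqref{intMFC2nd} and the concluding appeal to finite-time convergence of $s_k$ followed by the Schur recursion $e_{k+1}=(1-\mu)e_k$ is identical.
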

\begin{proof}
The proof is based on showing equivalence of the control law \eqref{MFClaw2nd} for this second-order 
system with the more general expression \eqref{MFClaw} in Theorem \ref{dFTScont}, given eqs. 
\eqref{sk2nd}-\eqref{y2dot}. Substituting $\nu=2$ 
into the right hand side of eq. \eqref{MFClaw} and noting that $c_1=\mu$, we obtain: 
\begin{align} 
\cG_k u_k=& y^d_{k+2}-2 y^d_{k+1}+ y^d_k - \frac{2\eta}{\big((s_k)\T s_k\big)^{1-1/q} + \eta}s_k \nn \\
&- \mu (e_{k+1}- e_k) -\hat\cF_k. \label{intMFC2nd} 
\end{align}
Now substituting for $s_k$ from eq. \eqref{sk2nd} into the numerator of the fractional term on the right 
hand side of expression \eqref{intMFC2nd}, we obtain the control law \eqref{MFClaw2nd} for this second 
order system. Therefore, according to Theorem \ref{dFTScont}, the feedback system given by eqs. 
\eqref{y2dot} and the control law \eqref{MFClaw2nd}, tracks the desired output trajectory $y^d(t)$ in an 
exponentially stable manner if $\hat\cF_k$ converges to $\cF_k$ in finite time.
\end{proof}

The above result, in combination with the output observer in Section \ref{sec: ftsobs} and the two ultra-local 
model observers in Section \ref{sec: ulm}, is applied to a second-order system, the inverted pendulum on a 
cart with nonlinear friction terms, in numerical simulations carried out in the following section. 

\section{Numerical Simulation Results}\label{sec: numres}
In this section, we provide numerical simulation results of the model-free tracking control framework on an 
inverted pendulum on a cart with nonlinear friction terms affecting the motion of both the degrees of freedom. 
The dynamics model of this system is unknown to the controller. This system is described in Section 
\ref{ssec: invpen} and the numerical results of the control scheme are given in Section \ref{ssec: simres}.
\subsection{Inverted pendulum on cart system}\label{ssec: invpen}
\begin{figure}[hbt]
\begin{center}
\hspace*{-2mm}\includegraphics[height=2.3in]{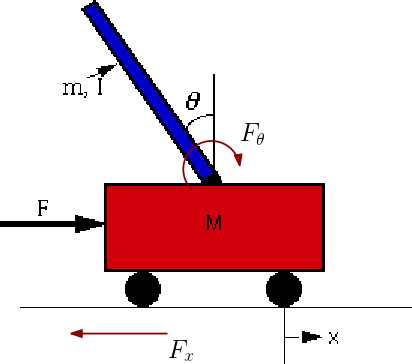}
\caption{Inverted pendulum system to which our nonlinear model-free control framework 
is applied.}  
\label{invpenfig}                                 
\end{center}        
\vspace*{-2mm}                         
\end{figure}
The inverted pendulum on cart is a two degree-of-freedom mechanical system, with the cart position $x$ 
considered positive to the right of an inertially-fixed origin and the angular displacement $\theta$ considered 
positive counter-clockwise from the upward vertical, as shown in Fig. \ref{invpenfig}. The input to the 
system is a horizontal force on the cart denoted $F$ in this figure, and the output is the angular 
displacement of the pendulum $\theta$; therefore, this is a single input single output (SISO) system. The 
mass and rotational inertia of the pendulum are $m$ and $I$ respectively, its length is $2l$, and the mass 
of the cart is $M$. A dynamics model of the system, which is unknown for the purpose of control design, 
is used to generate the desired output trajectory to be tracked. Then the model-free control scheme is 
used to track this desired trajectory.  

For simulation purposes, the inverted pendulum on a cart system is subjected to a nonlinear friction 
force acting on the cart's motion, and a nonlinear friction-induced torque acting on the pendulum. The 
friction force acting on the cart is denoted $F_x$ and the friction torque acting on the pendulum is 
denoted $F_\theta$, and they are given by:
\be F_x= c_x\tanh\dot x,\; F_\theta= c_\theta\tanh\dot\theta. \label{fricts} \ee
Note that the hyperbolic tangent function ensures that these frictional effects get saturated at high 
speeds ($\dot  x$ and $\dot\theta$). Therefore, the dynamics model of this system, which is unknown 
for the purpose of control design, is given by:
\begin{align}
\begin{split}
&\cM(q)\ddot q+ \cD(q,\dot q)= b F,\; q= \bbm x\\ \theta \ebm,\, b= \bbm 1\\ 0\ebm, \\
&\cM(q)= \bbm M+m & -ml\cos\theta \\ -ml\cos\theta & I+ ml^2\ebm, \\ &\cD(q,\dot q)= 
\bbm ml\dot\theta^2 \sin\theta+ c_x\tanh\dot x \\ c_\theta\tanh\dot\theta - mgl\sin\theta \ebm.
\end{split} \label{invpenmodel}
\end{align}
The input and output are:
\be u= F,\;\ y=\theta. \label{inpout-invpen} \ee
For the purpose of the numerical simulation, the parameter values selected for this system are:
\begin{align}
\begin{split}
&M=1.5\, \mrm{ kg}, \; m=0.5\, \mrm{ kg},\; l=1.4\, \mrm{ m},\; I=0.84\, \mrm{ kg\, m}^2,\\
&g= 9.8\, \mrm{ m/s}^2,\; c_x=0.028\, \mrm{ N}, \; c_\theta=0.0032\, \mrm{ N\, m}. 
\end{split} \label{sysparams} 
\end{align}
The desired trajectory was generated by applying the following open-loop control input (force) to 
the cart:
\be F= -c_x \dot x -0.5 c_\theta \dot\theta - 0.1 c_x x.
\label{OLcont} \ee
This generates an output trajectory $\theta^d (t)$ that is oscillatory with slowly decreasing amplitude, 
as depicted in Fig. \ref{gentraj} in Section \ref{ssec: simres}. Note that the model used here is purely 
for the purpose of trajectory generation and to demonstrate the working of the model-free control 
framework outlined in this paper. The framework itself is more widely applicable to systems that may 
not have known input-output (or input-state) models or systems that are very difficult to model, e.g., 
biological processes. 

\subsection{Simulation results of control scheme}\label{ssec: simres}
Here we present numerical simulation results for the model-free tracking control scheme applied 
to the system described by eqs. \eqref{invpenmodel}-\eqref{sysparams}. A trajectory is generated 
for this system using the control scheme \eqref{OLcont} along with the initial states: 
\be \bbm q^d(0)\\ \dot q^d(0) \ebm= \bbm x^d(0)\\ \theta^d (0)\\ \dot x^d(0)\\ \dot\theta^d(0) \ebm= 
\bbm 0.45\, \mbox{m}\\ -0.14\, \mbox{rad} \\ -0.3\, \mbox{m/s} \\ 0.05\, \mbox{rad/s} \ebm. \label{initstates} \ee
The generated trajectory $y^d(t)=\theta^d (t)$ for a time interval of $T=70$ seconds is depicted in 
Fig. \ref{gentraj}. 
\begin{figure}[htb]
\begin{center}
\includegraphics[width=\columnwidth]{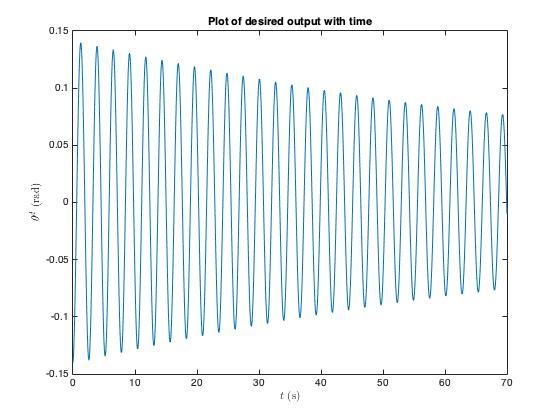}
\caption{Desired trajectory generated for $T=70$ seconds for inverted pendulum on cart system.}  
\label{gentraj}                                 
\end{center}        
\vspace*{-2mm}                         
\end{figure}

The control scheme given by Corollary \ref{cont2nd} is then applied to this system to track this  
desired trajectory. For this simulation, we assume that the initial estimated states are: 
\be \bbm \hat q(0)\\ \dot{\hat q}(0) \ebm= \bbm \hat x(0)\\ \hat \theta(0)\\ \dot{\hat x}(0)\\ \dot{\hat\theta}(0) 
\ebm= \bbm 0\, \mbox{m}\\ 0.102\, \mbox{rad} \\ 0\, \mbox{m/s} \\ 0\, \mbox{rad/s} \ebm. \label{initcs} \ee
Measurements of the output are assumed at a constant rate of 50 Hz, i.e., sampling period $\Delt= 0.02$ s. 
In the simulation, the measurements are generated by numerically propagating the true dynamics model 
of the inverted pendulum on cart system given earlier, and adding noise to the output $y(t)=\theta(t)$. 
The additive noise is generated by a random number generator that uses a bump function of width 
$0.018$ rad ($\approx 1.03^\circ$) as a probability distribution function. Observer gains used for this 
simulation, with the observer structure given in Theorem \ref{dFTSobs}, are:
\be L =2.1,\;\ \beta=2,\; \mbox{ and }\; p=\frac75. \label{obsv-gains} \ee
The first order ultra-local model observer given by Proposition \ref{prop1st} is used, with observer gains:
\be \lambda=1.5,\; \mbox{ and }\; r=\frac97. \label{ulmobs-gains} \ee
This observer is initialized with the zero vector, i.e., $\hat\cF_0=0$.
The control law \eqref{MFClaw2nd} is then used to compute the control inputs $u_k$ for $k>1$. The control 
gains used in this simulation are:
\begin{align} 
\begin{split}
&\eta=1,\; p=\frac{11}{9}, \; \mu=0.35, \\
&\mbox{and }\; \cG_k= 1.5 \big(1+\tanh (\|E_k\|)\big), 
\end{split}\label{contr-gains} 
\end{align}
where $E_k$ is the total of the last three terms in eq. \eqref{intMFC2nd}.

\begin{figure}[htb]
\begin{center}
\includegraphics[width=\columnwidth]{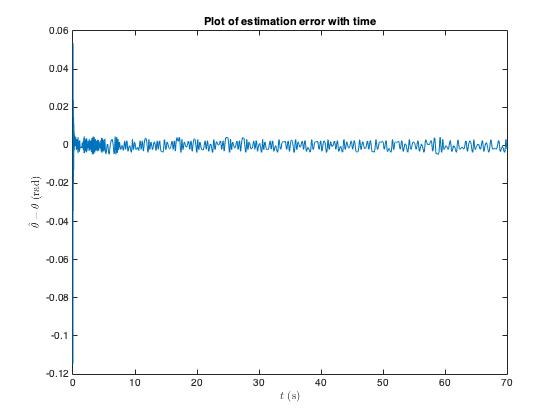}
\vspace*{1mm}
\includegraphics[width=\columnwidth]{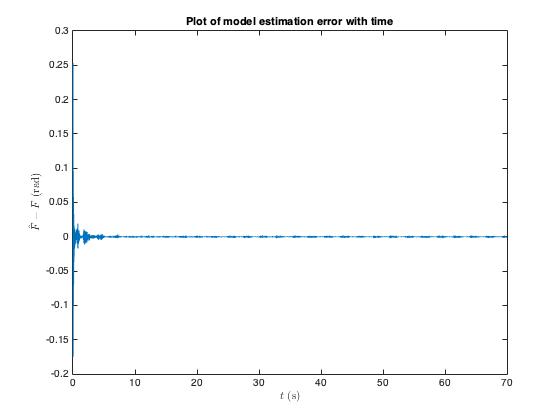}
\vspace*{-5mm}
\caption{Estimation errors in output estimation (top) and ultra-local model estimation (bottom) for inverted 
pendulum on cart system with model-free control.}  
\label{estimerrs}                                 
\end{center}        
\end{figure}
The simulation results for estimation error in estimating the output from noisy measurements ($e^0_k$) 
using the finite-time stable observer outlined in Section \ref{ssec: ftsobss}, and estimation error in estimating 
the ultra-local model according to Section \ref{ssec: estF1st} are depicted in Fig. \ref{estimerrs}. Simulation 
results for the tracking control performance are shown in Fig. \ref{tracontinp}. 
\begin{figure}[hbt]
\begin{center}
\includegraphics[width=\columnwidth]{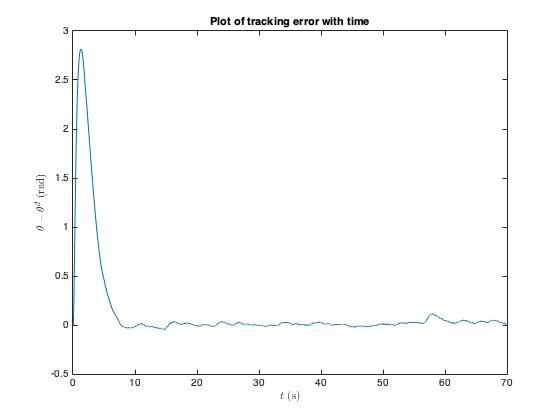}
\vspace*{1mm}
\includegraphics[width=\columnwidth]{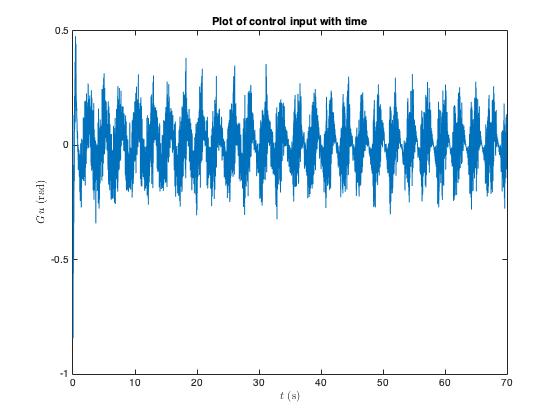}
\vspace*{-5mm}
\caption{Output trajectory tracking error (top) and control input (bottom) for inverted pendulum on cart system 
with model-free control.}  
\label{tracontinp}                                 
\end{center}        
\end{figure}
The plot on the top shows the output trajectory tracking error over the simulated duration. Note that the 
tracking error settles down to within an error bound less than about $0.3$ rad in steady state after an 
initial brief period of transients. The time plot of the control input is shown in the bottom plot.
This control input profile shows some high frequency oscillations in tracking the desired trajectory, that 
seem to correlate with the oscillations seen in the output observer error (due to measurement noise) and 
therefore the ultra-local model observer error in Fig. \ref{estimerrs}. Future work will deal 
with reducing these transients by doing one or more of the following: (i) using more advanced schemes 
for predicting the ultra-local model from past input-output history; (ii) tuning of observer and controller  
gains to reduce the amplitude of oscillations; and (iii) using integral term(s) in the observer designs to 
produce smoother estimates of the output and ultra-local model. A reference governor may also be used 
to modify the reference (desired) output trajectory based on current estimates of outputs as in, 
e.g.,~\cite{ergov16}.  

\begin{remark}
Although the schemes given here assume that the output space is a vector space, the angle output 
for this inverted pendulum on cart example is on the circle $\bS^1$, which is not a vector space. 
Therefore, the observer and control laws outlined in the earlier sections may lead to unwinding, 
even though that does not happen for the numerical simulation reported here. The model-free observer 
and controller design framework outlined here will be extended to systems evolving on non-Euclidean 
output (or state) spaces in the future, to address this issue.
\end{remark}

\section{Conclusion}\label{sec: conc}
This paper presents a formulation of a model-free control approach that guarantees nonlinear stability 
for output tracking control with feedback of output measurements that may contain additive noise. The 
formulation presented here is developed in discrete time, and uses the concept of a control affine  
ultra-local model used to model unknown input-output behavior that was used in the linear model-free 
control approach formulated by Fliess and Join in the last decade. However, that is where the 
similarity ends. The first part of the framework given here uses a continuous nonlinear observer for 
estimating the outputs from the measurements. This observer ensures finite-time stable convergence 
of the output estimation errors to zero, which in turn enables separate design of a continuous nonlinear 
controller for output feedback tracking. The second part of the framework develops nonlinearly stable 
and robust observers to estimate the ultra-local model that models the unknown input-output dynamics, 
from past input-output history. In the last part of the framework, a nonlinear output feedback tracking 
control law is designed that uses estimates of the measured output and the ultra-local model, to give a 
nonlinearly stable and robust control scheme. Nonlinear stability analysis shows the stability of the 
feedback compensator combining the nonlinear observers and nonlinear control law when the change 
in the discrete-time system dynamics modeled by the ultra-local model has a bounded finite difference. 
A numerical simulation experiment is carried out on an inverted pendulum on a cart system with 
nonlinear friction, for which the input is the horizontal force applied to the cart and the output is the angle 
from the upward vertical of the pendulum. Noisy measurements of the output are available with bounded 
amplitude of noise. The model of the dynamics of this system is unknown to the nonlinear observer and 
controller designed using our nonlinear model-free control framework. This numerical experiment shows 
convergence of output estimation errors and output tracking errors to small absolute values. Future work 
will explore extensions of this framework to systems evolving on Lie groups and their principal bundles, 
and also development of stable higher-order observers for the ultra-local model for increased robustness.

\section{Acknowledgements}
A large portion of this work was carried out by the author when he was hosted by the Systems and 
Control (SysCon) department at Indian Institute of Technology, Bombay, India, (IIT-B), in the summer of 
2019. Helpful discussions with his hosts at SysCon (IIT-B), Debasish Chatterjee and Sukumar Srikant, 
are gratefully acknowledged. 

\bibliographystyle{IEEEtran}
\bibliography{alias,references}


\end{document}